\documentclass[conference]{IEEEtran}
\IEEEoverridecommandlockouts

\usepackage[letterpaper, top=.75in, bottom=.75in, left=0.75in, right=0.75in]{geometry}

\newgeometry{top=1in, bottom=.75in, left=.75in, right=.75in}

\usepackage{amsmath,amssymb,amsfonts}
\usepackage{algorithm}
\usepackage{algpseudocode}

\usepackage{graphicx}
\usepackage{textcomp}
\usepackage{xcolor}
\def\BibTeX{{\rm B\kern-.05em{\sc i\kern-.025em b}\kern-.08em
    T\kern-.1667em\lower.7ex\hbox{E}\kern-.125emX}}
\usepackage{cite}  

\usepackage{enumerate}
\usepackage{cleveref}
\usepackage{comment}
\usepackage{algorithm}
\usepackage{booktabs}
\usepackage{upgreek}
\usepackage{romannum}

\usepackage{times}
\usepackage{courier}
\usepackage{helvet}
\usepackage[hyphens]{url}
\usepackage{caption}
\usepackage{subcaption}
\usepackage{mathtools}

\usepackage{tikz}
\usetikzlibrary{arrows.meta, positioning, fit, backgrounds, decorations.pathreplacing}

\usepackage{amsthm}
\theoremstyle{plain}
\newtheorem{theorem}{Theorem} 
\newtheorem{lemma}{Lemma}

\newtheorem{corollary}{Corollary}
\newtheorem{property}{Property}

\theoremstyle{definition}
\newtheorem{definition}{Definition}
\newtheorem{example}{Example}
\newtheorem{assumption}{Assumption}[subsection]

\theoremstyle{remark}
\newtheorem{remark}{Remark}[subsection]

\begin{document}
\title{
Port-Transversal Barriers: Graph-Theoretic Safety for Port-Hamiltonian Systems
}

\author{Chi Ho Leung and Philip E. Par\'{e}*
    \thanks{*Chi Ho Leung and Philip E. Par\'e are with the Elmore Family School of Electrical and Computer Engineering, Purdue University, USA.
    E-mail: leung61@purdue.edu, 
            philpare@purdue.edu. 
    This material is based upon work supported in part by the US National Science Foundation (NSF-ECCS \#2238388).
    }
}

\maketitle

\begin{abstract}
Synthesizing control barrier functions (CBFs) for complex nonlinear systems remains difficult, particularly when safety constraints have high relative degree with respect to the input.
In this work, we study the class of composite port-Hamiltonian systems as a network of interconnected subsystems, and this network admits a natural graph structure.
We propose a novel graph-theoretic framework for the CBF synthesis problem.
The framework addresses the problem of CBF construction for complex nonlinear systems by offering a relative-degree constraint induced by graph distance and a barrier reshaping rule informed by the graph topology.
Furthermore, we perform a feasibility analysis of the reshaped barrier under a bounded input constraint and identify the sufficient conditions for feasibility.
Lastly, the proposed framework is validated on an LC ladder network, where the CBF feasibility on a capacitor charge constraint depends on where the input enters the network.
\end{abstract}

\begin{IEEEkeywords}
Control Barrier Functions,
Port-Hamiltonian Systems,
Energy-based Safety
\end{IEEEkeywords}

\section{Introduction}

The design of safety filters for complex nonlinear systems remains a challenge despite an extensive research effort.
In particular, safety constraints could result in a relative degree greater than one with respect to the input.
In this case, the standard CBF quadratic program (QP) safety filter~\cite{ames2019control,ames2017cbf} is not directly applicable.

Mitigation strategies include high-order CBF, control barrier value function, and energy-aware CBF.
Each of these strategies operates under different principles and comes with its own set of trade-offs in computation complexity and modeling assumptions.
For instance, high-order CBFs require the exact model knowledge of the relative degree $r$-th repeated Lie derivatives and the model error compounds with each derivative~\cite{nguyen2016exponential, xiao2021high}. 
Further, control barrier value functions suffer from the curse of dimensionality when solving for the CBF via the variational inequality inherited from the Hamilton-Jacobi reachability problem~\cite{choi2021robust}. 
Lastly, energy-aware CBF construction is tied to canonical mechanical coordinates, and a bounded input feasibility analysis has not yet been attempted~\cite{singletary2021safety, califano2024effect}.
However, the body of literature that addresses the high relative degree issue by leveraging the interconnection structure of the underlying system remains sparse.
The approach proposed in this work can be seen as a generalization of energy-aware CBFs to the class of interconnected port-Hamiltonian systems.

The port-Hamiltonian theory provides a physically interpretable and network-based approach to model complex nonlinear systems as interconnected subsystems~\cite{duindam2009modeling, neary2023compositional}.
This interconnection structure induces a graph over the subsystems, whose properties we analyze for the CBF construction.
The graph theoretic analysis is composed of two parts: (i) relative degree constraints induced by graph distance and (ii) control barrier reshaping rule informed by the graph's topology.
Lastly, we perform a feasibility analysis of the barrier under a bounded input constraint.


\subsection*{Notation}
The notation \(\mathbb{R}\) denotes the real number line. 
Vectors in \(\mathbb{R}^n\) are column vectors. 
$[A]_{ij}$ denotes the $i,j$ entry of a matrix $A$.
$C^1(\mathcal{D})$ denotes the class of continuously differentiable functions in the domain $\mathcal{D}\subseteq \mathbb{R}^n$.
We denote $(s)^+ = \max(0,s)$.
The gradient of a scalar function $H$ with respect to $x$ is denoted as $\nabla_x H$.
For a set $\mathcal S$, its interior is denoted by $\operatorname{int}(\mathcal S)$ and its boundary is denoted by $\partial \mathcal S$.

\section{Problem Formulation}
\label{subsec:problem}

\noindent\textit{System Dynamics.}
Consider $N$ port-Hamiltonian subsystems:
\begin{equation}
    \begin{split}\label{eq:pH-subsystems}
        \dot{x}_i &= \bigl[J_i(x_i) - R_i(x_i)\bigr]\nabla H_i(x_i) + G_i(x_i)u_i + w_i,\\
        y_i &= G_i^\top \nabla H_i(x_i),\quad z_i = \nabla H_i,
    \end{split}
\end{equation}
with $x_i \in \mathcal{D}_i \subset \mathbb{R}^{n_i}$,
    $u_i \in \mathcal{U}_i \subset \mathbb{R}^{m_i}$, and
    $y_i \in \mathcal{Y}_i \subset \mathbb{R}^{m_i}$,
where $J_i, R_i, G_i$ are the skewed interconnection, semi-positive definite dissipation matrices, and the
input map, respectively.
Each subsystem additionally exposes interconnection ports $w_i, z_i$ with unit port maps through which it exchanges power with neighboring subsystems.
By compositionality, two port-Hamiltonian systems interconnected through a power-preserving interconnection form a composite port-Hamiltonian system~\cite{van2014port}.
The dynamics of the interconnected system from \eqref{eq:pH-subsystems} is written as~\cite{neary2023compositional, van2025learning}:
\begin{equation}
    \begin{split}\label{eq:true-dynamics}
        \dot{x} &= \bigl[J(x) - R(x) + Q(x)\bigr]\nabla H(x)
              + G(x)u,\\
        y &= G^\top \nabla H(x),
    \end{split}
\end{equation}
where $J(x)$, $R(x)$, and $G(x)$ only contain block diagonal elements with $J(x) = {\rm diag}(\{J_i(x_i)\}_{i=1}^N)$ and similar structure for $R(x)$ and $G(x)$.
The input, state, and output are defined by concatenating the subsystems' input, state, and output, i.e., $x = {\rm vec}(\{x_i\}_{i=1}^N) \in \mathcal{D} \subset \mathbb{R}^{n}$ with $n = n_i + \dots + n_N$ and similarly for $u\in \mathcal{U}$ and $y\in \mathcal{Y}$.
The matrix $Q(x)$ encodes coupling information between the subsystems and is skew
\newgeometry{top=.75in, bottom=.75in, left=0.75in, right=0.75in}

\noindent symmetric with zero block-diagonal entries. 
In particular, $Q$ encodes which $z_i$ are coupled with which $w_j$ between two subsystems.


\vspace{0.5em}
\noindent\textit{Allowable set.}
The user prescribe a safety specification function
$\varphi: \mathcal D \to \mathbb{R}$
that defines the allowable set:
\[
    \mathcal{A}
    \coloneqq
    \bigl\{x \in \mathcal{D} : \varphi(x) \geq 0\bigr\},
\]
encoding safety requirements such as kinematic and charge constraints.
Our objective are:
\begin{enumerate}[(P1)]
    \item To determine 
    the relative degree of the safety specification $\varphi$
    with respect to the input placement in \eqref{eq:true-dynamics}.
    \item In the relative-degree-two case, we identify the minimal set of subsystems' local storage terms needed to reshape $\varphi$ into a relative-degree-one barrier, yielding a port-transversal barrier construction.
    \item Identify conditions under which the resulting port-transversal barrier satisfies the CBF feasibility condition under bounded inputs.
\end{enumerate}
The problems are resolved by the influence graph analysis and barrier synthesis of Section~\ref{sec:port-transversality}, using the structure $(J, R, Q, G)$.

\section{Main Results} \label{sec:main}
We introduce the influence graph-based relative degree analysis and present the synthesis of a port-transversal barrier.


\subsection{Port Transversality}\label{sec:port-transversality}

We begin by examining when a safety constraint is compatible with the input replacement of a port-Hamiltonian system. 
The key observation is that this compatibility is largely determined by the {interconnection topology} of the system.
In this section, we assume $\varphi$, $J$, $R$, $Q$, and $G$ to be $C^\infty$ smooth.

\subsubsection{Subsystem Interconnection Structure and Graphs}
Recall the composite port-Hamiltonian system in \eqref{eq:true-dynamics},
we write $A(x) \coloneqq J(x) - R(x) + Q(x)$
for the combined drift matrix.
The state $x$ decomposes into $N$ \emph{subsystems states}
$x_1, \dots, x_N$, 
reflecting the physical subsystems (inertias,
compliances, capacitances, inductances, etc.) constituting the network.
The key properties and assumptions of \eqref{eq:true-dynamics} are collected in the following.

\begin{property}[Separable storage]\label{ass:separable}
    The Hamiltonian decomposes as
      $H(x) = \sum_{i=1}^{N} H_i(x_i)$.
\end{property}

While the separable storage property comes for free under composition of subsystems, arbitrary state dependency of interconnection could complicate downstream analysis.

\begin{assumption}[Local dependence of interconnection]\label{ass:local-dep}
  For all subsystem indices $i,j$ and all $k \notin \{i,j\}$, $\frac{\partial Q_{ij}}{\partial x_k}(x) \equiv 0$.
\end{assumption}

In other words, block $Q_{ij}(x)$ only has state dependency on subsystem state $x_i$ or $x_j$.
Note that Assumption~\ref{ass:local-dep} is trivially satisfied when $Q$ is constant.
Furthermore, we assume local subsystem storage functions are strictly convex.
\begin{assumption}[Convexity]\label{ass:convexity}
  Each $H_i : \mathbb{R}^{n_i} \to \mathbb{R}$ is $C^2$ and strictly convex with a unique minimum at~$x_i^\star$.
\end{assumption}

While not intrinsic to the general framework, Assumption~\ref{ass:convexity} is physically natural: each storage element has a unique energy-minimizing equilibrium $x_i^\star$, as for typical springs, capacitors, and inertias. 
We now define two graphs on \eqref{eq:true-dynamics}.

\begin{definition}[Power flow graph]\label{def:power-flow-graph}
  The \emph{power flow graph} $\mathcal{G}(Q) = (\mathcal{V},
  \mathcal{E}_Q)$ has node set $\mathcal{V} = \{1, \dots, N\}$ and
  edge set
  $\mathcal{E}_Q = \bigl\{\{i,j\} : Q_{ij} \not=
  0\bigr\}$.
\end{definition}

\begin{definition}[Influence graph]\label{def:influence-graph}
  The \emph{influence graph} $\mathcal{G}(A) = (\mathcal{V},
  \mathcal{E}_A)$ has the same node set and edge set
  $\mathcal{E}_A = \bigl\{\{i,j\} : A_{ij}(x) \not\equiv
  0\bigr\}$.
\end{definition}

\begin{figure}[t]
    \centering
    \includegraphics[width=\linewidth]{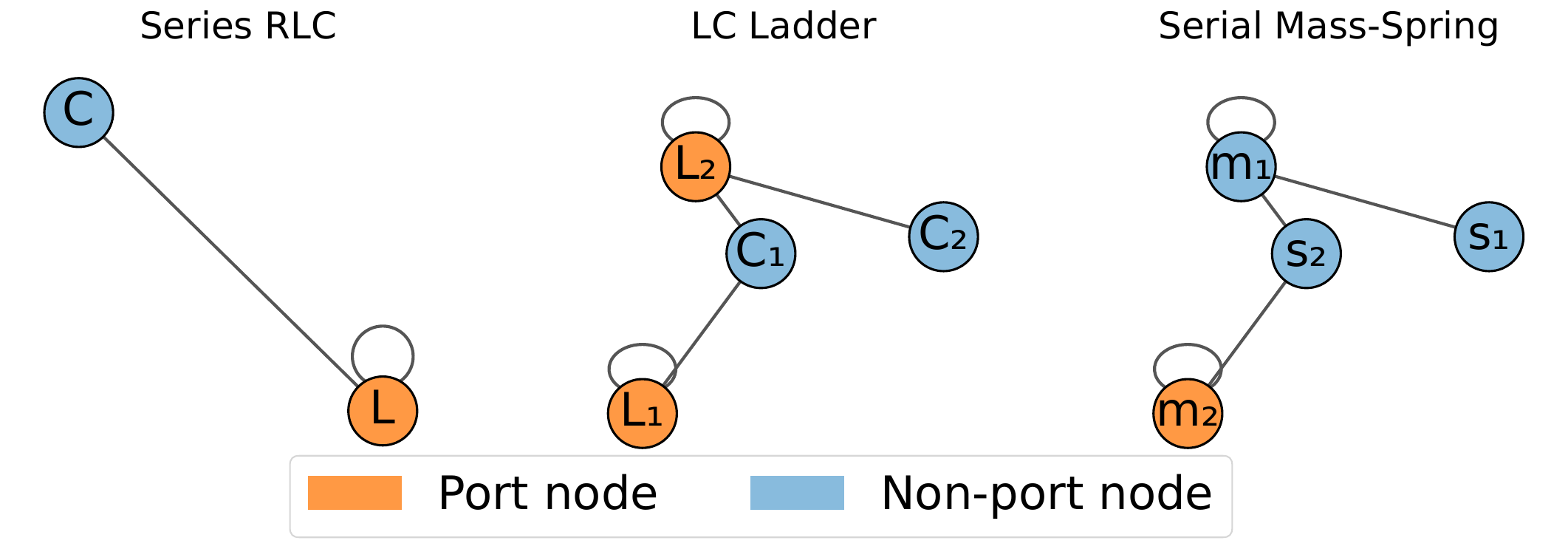}
    \caption{%
        Influence graphs $\mathcal{G}(A)$ of three port-Hamiltonian systems.
        %
        {(Left)} Series RLC:
        nodes $C$ and $L$ represent the capacitor and inductor subsystems states, and the edge $\{C, L\}$ encodes the power exchange between the capacitor and inductor.
        %
        {(Center)}
        LC Ladder with two voltage sources.
        %
        {(Right)} 
        Serial Mass-Spring:
        the port enters only at mass $m_2$ with potential and kinetic storage on springs $s_1, s_2$ and masses $m_1, m_2$ with input applying on the momentum of mass $m_2$.
        }
    \label{fig:influence_graphs}
    \vspace{-1.5em}
\end{figure}

It is worth noting that the power flow graph $\mathcal{G}(Q)$ retains physical interpretability in the sense that its edges represent conservative power exchange between subsystems.
In contrast, the influence graph $\mathcal{G}(A)$ governing the actual propagation of
the drift is the central object for our analysis.
Fig.~\ref{fig:influence_graphs} shows the influence graphs of three simple systems, illustrating that systems from different domains can share similar topological structure.
In addition, we introduce Example~\ref{ex:lc-ladder} to guide the discussion throughout this paper.

\begin{example}[Dual-inputs LC ladder network]\label{ex:lc-ladder}
Consider a two-section LC ladder driven by voltage sources
$u_1$ and $u_2$ through lossy inductors $L_1$ and $L_2$,
with capacitors $C_1$ and $C_2$ as shown in Fig.~\ref{fig:LC_ladder_diagram}.
The subsystem states are
$x = (q_1,\, q_2,\, \phi_1,\, \phi_2)^\top$,
where $q_i$ is the charge on~$C_i$ and $\phi_i$ the flux through~$L_i$.
The separable Hamiltonian is:
\[
  H(x)
  = \frac{q_1^2}{2C_1}
  + \frac{q_2^2}{2C_2}
  + \frac{\phi_1^2}{2L_1}
  + \frac{\phi_2^2}{2L_2},
\]
with efforts
$e \coloneqq \nabla H = (v_{C_1},\, v_{C_2},\, i_{L_1},\, i_{L_2})^\top$.
Notice that $z_i \coloneqq e_i$ for each subsystem, and
Kirchhoff's voltage and current laws yield the interconnection structure $Q$.
The structure tuple of \eqref{eq:true-dynamics} is then derived as $J = 0$, and:
\[
  R = \begin{psmallmatrix}
    0 & 0 & 0 & 0 \\
    0 & 0 & 0 & 0 \\
    0 & 0 & R_1 & 0 \\
    0 & 0 & 0 & R_2
  \end{psmallmatrix},\;
  Q = \begin{psmallmatrix}
    0 & 0 & 1 & -1 \\
    0 & 0 & 0 & 1 \\
    -1 & 0 & 0 & 0 \\
    1 & -1 & 0 & 0
  \end{psmallmatrix},\;
  G = \begin{psmallmatrix}
    0 & 0 \\ 0 & 0 \\ 1 & 0 \\ 0 & 1
  \end{psmallmatrix}.
\]
\noindent The corresponding influence graph $\mathcal G(A)$, illustrated in Fig.~\ref{fig:influence_graphs} (Center), has edges
$\{C_1, L_1\}$, $\{C_1, L_2\}$, $\{C_2, L_2\}$, $\{L_1, L_1\}$, and $\{L_2, L_2\}$ induced from the sparsity structure of $A$.
Lastly, we assign a constraint $q_1 \leq q_{\max}$ on $C_1$ to facilitate future discussion for barrier function synthesis.
\end{example}

\begin{figure}
    \centering
    \includegraphics[width=\linewidth]{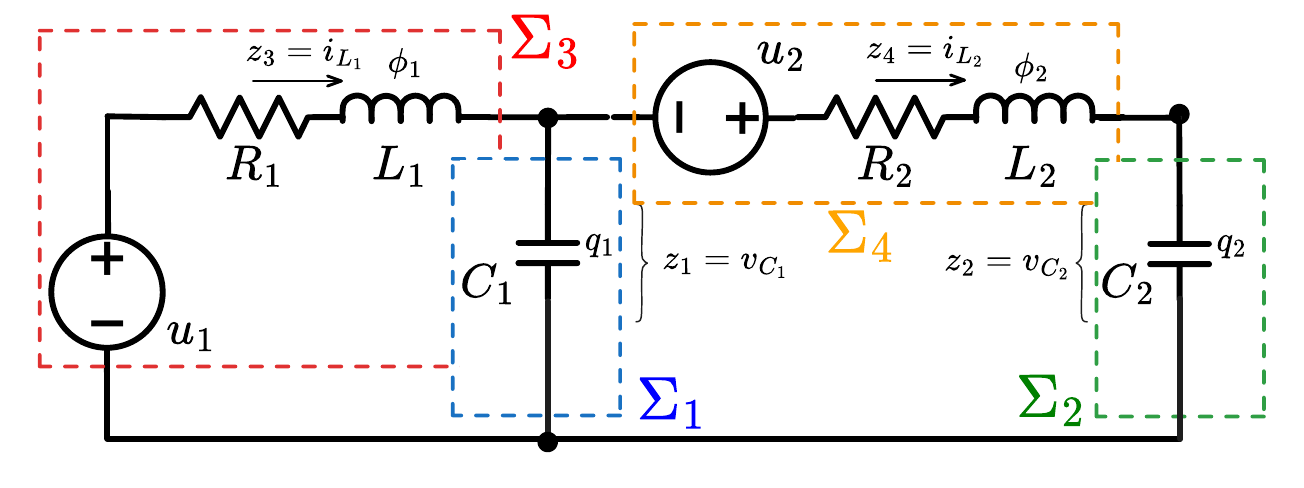}
    \caption{Dual-input LC ladder network that induces the influence graph in Fig.~\ref{fig:influence_graphs} (Center) with subsystems labeled as $\Sigma_i$.}
    \label{fig:LC_ladder_diagram}
    \vspace{-1.5em}
\end{figure}

\subsubsection{Port Nodes, Barrier Nodes, and Transversality}

Given a smooth safety specification $\varphi : \mathcal{D} \to \mathbb{R}$
defining the allowable set $\mathcal{A} = \{x \in \mathcal{D} : \varphi(x) \geq
0\}$ with $\nabla \varphi \neq 0$ on~$\partial\mathcal{A}$, we identify two
distinguished subsets of~$\mathcal{V}$:
\begin{align}
  \text{\emph{Port nodes:}} \quad
  \mathcal{P}
  &= \bigl\{i \in \mathcal{V} : G_i \not\equiv 0 \bigr\},
  \label{eq:port-nodes} \\
  \text{\emph{Barrier nodes:}} \quad
  \mathcal{B}
  &= \bigl\{i \in \mathcal{V} : \nabla_{x_i} \varphi \not\equiv 0 \bigr\},
  \label{eq:barrier-nodes}
\end{align}
where $G_i$ denotes the block of rows of~$G$ corresponding to
subsystem~$i$.

\begin{definition}[Subsystem support]\label{def:support}
  For a smooth scalar function $\varphi : \mathcal{D} \to \mathbb{R}$,
  the \emph{subsystem support} is
  $\operatorname{supp}(\varphi)
  \coloneqq \{i \in \mathcal{V} : \nabla_{x_i}\varphi \not\equiv
  0\}$.
\end{definition}

In this notation, $\mathcal{B} = \operatorname{supp}(\varphi)$ and
$\mathcal{P} = \{i : G_i \not\equiv 0\}$.
{In Example~\ref{ex:lc-ladder}, since we prescribe a charge constraint $q_1 \leq q_{\max}$ on $C_1$,
the barrier node set is $\mathcal{B} = \{C_1\}$.
The port node set is $\mathcal{P} = \{L_1, L_2\}$ because both inductors carry input ports.}

\begin{definition}[$k$-hop neighborhood]\label{def:k-hop}
  For a subset $\mathcal{W} \subseteq \mathcal{V}$, define
  $\mathcal{N}^0(\mathcal{W}) = \mathcal{W}$. 
  The $k$-hop neighborhood of $\mathcal{W}$ in the influence graph $\mathcal{G}(A)$ is defined recursively:
  \begin{equation*}
      \begin{aligned}
          &\mathcal{N}^{k+1}(\mathcal{W})
            = \\
            &\qquad\mathcal{N}^k(\mathcal{W})
              \;\cup\;
              \bigl\{j \in \mathcal{V}
                : \exists\, i \in \mathcal{N}^k(\mathcal{W})
                  \text{ with }
                  \{i,j\} \in \mathcal{E}_A
              \bigr\}.
      \end{aligned}
  \end{equation*}
\end{definition}
In this work, we treat \eqref{eq:true-dynamics} as a control-affine system 
  $\dot x = f(x) + G(x)\,u$ with 
  $G = [g_1 \;\cdots\; g_m]$, 
  and write 
  $L_G\varphi \coloneqq \nabla\varphi^\top G \in \mathbb{R}^{1\times m}$ 
  for any smooth scalar function~$\varphi$.
\begin{definition}[Relative degree]\label{def:relative-degree}
  We say $\varphi$ has \emph{relative degree}
  $r \in \mathbb{N}$ at $x_0 \in \mathcal{D}$ if there exists
  a neighborhood $U \subseteq \mathcal D$ s.t.:
  \[
    L_G L_f^k \varphi(x) = 0, 
    \quad \forall\, x \in U,
    \quad k = 0,\dots,r{-}2,
  \]
  and $L_G L_f^{r-1}\varphi(x_0) \neq 0$.
  If $L_G L_f^k \varphi(x) = 0$ for all $x \in U$ and 
  all $k \geq 0$, then $\varphi$ has 
  \emph{infinite relative degree} at~$x_0$.
\end{definition}

\begin{definition}[Port transversality]\label{def:port-transversality}
  The function~$\varphi$ is \emph{port-transversal} with respect to
  system~\eqref{eq:true-dynamics} if:
  \begin{equation}\label{eq:pt-condition}
    \nabla \varphi(x)^\top G(x) \neq 0,
    \quad \forall\,x \in \partial\mathcal{A}\setminus\mathcal Z,
  \end{equation}
  with $\mathcal Z$ a degeneracy set of measure zero with respect to $\partial\mathcal{A}$.
\end{definition}

In application, the degeneracy set $\mathcal Z$ is typically nonempty, and we will make the form of $\mathcal Z$ precise in
\eqref{eq:blanket-degeneracy}.




\subsubsection{Relative Degree from Graph Distance}
The goal of this section is to characterize the relative degree of $\varphi$ from the structure of $\mathcal G(A)$.


\begin{lemma}[Support propagation]\label{lem:propagation}
  In \eqref{eq:true-dynamics}, under Assumption~\ref{ass:local-dep},
  $\operatorname{supp}(L_f^k \varphi)
  \subseteq \mathcal{N}^k(\mathcal{B})$
  for all $k \geq 0$.
\end{lemma}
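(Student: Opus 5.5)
The plan is induction on $k$, using the one-step inclusion
\[
\operatorname{supp}(L_f\psi)\subseteq\mathcal{N}^1\bigl(\operatorname{supp}(\psi)\bigr)
\]
for any smooth scalar $\psi$. Since $\mathcal{N}^1(\mathcal{N}^k(\mathcal{B}))=\mathcal{N}^{k+1}(\mathcal{B})$ by Definition~\ref{def:k-hop}, this gives the claim. The base case $k=0$ is the definition $\mathcal{B}=\operatorname{supp}(\varphi)$.

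Throughout, I read $\operatorname{supp}(\psi)=S$ as saying that $\psi$ depends only on the block variables $x_S=(x_i)_{i\in S}$. This follows from $\nabla_{x_k}\psi\equiv 0$ for $k\notin S$, provided the domain is convex or a product of connected subsystem domains. I would state that standing assumption on $\mathcal{D}$ explicitly.

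For the one-step inclusion, I expand the drift $f=A\nabla H$ blockwise. Property~\ref{ass:separable} gives $\nabla H=\mathrm{vec}(\nabla H_j(x_j))$, so
\[
L_f\psi(x)=\sum_{i\in S}\sum_{j=1}^N \nabla_{x_i}\psi(x_S)^\top A_{ij}(x)\,\nabla H_j(x_j).
\]
The terms with $i\notin S$ vanish identically. I then track which variables each surviving term can depend on.
\begin{itemize}
\item $\nabla_{x_i}\psi$ depends only on $x_S$.
\item $\nabla H_j$ depends only on $x_j$.
\item For the coupling block, take $j\neq i$. Then $A_{ij}=Q_{ij}$, because $J$ and $R$ are block diagonal, and Assumption~\ref{ass:local-dep} says $Q_{ij}$ depends only on $(x_i,x_j)$.
\item For the diagonal block $j=i$, $A_{ii}=J_i(x_i)-R_i(x_i)$, since $Q$ has zero diagonal blocks. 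This depends only on $x_i$.
\end{itemize}
A term with $j\neq i$ is nonzero only when $A_{ij}\not\equiv 0$, that is, when $\{i,j\}\in\mathcal{E}_A$. Hence every nonzero term depends only on $x_S$ together with $x_j$ for some $j$ adjacent in $\mathcal{G}(A)$ to some $i\in S$. That index set is exactly $\mathcal{N}^1(S)$. A sum of functions of $x_{\mathcal{N}^1(S)}$ is again such a function, so $\nabla_{x_k}L_f\psi\equiv 0$ for all $k\notin\mathcal{N}^1(S)$.

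The main point needing care is the state dependence of the blocks of $A$. The inclusion would fail if some $A_{ij}$ depended on a third subsystem's state, which is exactly what Assumption~\ref{ass:local-dep} excludes. For the diagonal blocks, locality comes from the subsystem definition \eqref{eq:pH-subsystems}, where $J_i$ and $R_i$ are functions of $x_i$ alone. The input term $G$ plays no role, since $L_f$ involves only the drift. Smoothness ($C^\infty$) ensures every iterate $L_f^k\varphi$ is well defined, which completes the induction.
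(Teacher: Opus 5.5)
Your proposal is correct and follows the paper's proof: induction on $k$ driven by the one-step inclusion $\operatorname{supp}(L_f\psi)\subseteq\mathcal{N}^1(\operatorname{supp}\psi)$, obtained by expanding $L_f\psi=\sum_{i,j}\nabla_{x_i}\psi^\top A_{ij}e_j$ and tracking which block variables each factor depends on; your separate handling of the diagonal blocks $A_{ii}=J_i(x_i)-R_i(x_i)$ is slightly more careful than the paper, which attributes all locality to Assumption~\ref{ass:local-dep}. The extra convex/product-domain hypothesis you propose is unnecessary: differentiate each summand by the product rule and use $\partial_{x_k}\nabla_{x_i}\psi=\nabla_{x_i}\partial_{x_k}\psi\equiv 0$ for $k\notin S$, which gives $\nabla_{x_k}L_f\psi\equiv 0$ for all $k\notin\mathcal{N}^1(S)$ on any open domain.
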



\begin{proof}
    First, we show that under Property~\ref{ass:separable} (separable storage) and Assumption~\ref{ass:local-dep} (local dependence), if $\operatorname{supp}(g) \subseteq \mathcal W$, then
    \begin{equation}\label{eq:supp-one-step}
        \operatorname{supp}(L_f g) \subseteq \mathcal N^1(\mathcal W)
    \end{equation}
    for any smooth $g: \mathcal D \to \mathbb R$ and base set $\mathcal W \subseteq \mathcal V$.
    We start by recognizing that the following equality holds:
    $$L_f g = \nabla g^\top f = \sum_{i\in \operatorname{supp}(g)} \!\!\!\!\nabla_{x_i} g(x)^\top \!\!\!\sum_{j: \{i, j\}\in \mathcal E_A}\!\!\!\!A_{ij}(x)e_j(x_j),$$
    where the outer sum holds by the definition of $\operatorname{supp}(\cdot)$, and the inner sum holds by Property~\ref{ass:separable} for $e_j(x_j) \coloneq \nabla H_j(x_j)$.
    Next, we check the variable dependence of each term in the sum: $\nabla g(x)$ depends only on $x_i$ where $i\in \operatorname{supp}(g)$ and $A_{ij}(x)$ depends only on $x_i$ and $x_j$ for $j$ in the edge $\{i, j\}\in \mathcal E_A$ by Assumption~\ref{ass:local-dep}.
    Therefore, $\operatorname{supp}(L_f g) \subseteq \operatorname{supp}(g)\,\cup\,\{j: \{i, j\}\in\mathcal E_A \text{ and } i \in \operatorname{supp}(g)\}$, which is, by Definition~\ref{def:k-hop}, a subset of $\mathcal N^1(\operatorname{supp}(g))$ and $\mathcal N^1(\mathcal W)$.
    Next, we notice that the base case $\operatorname{supp}(\varphi) = \mathcal B$ is true by Definition~\ref{def:support}.
    The inductive case is true by letting $\operatorname{supp}(L_f^k\varphi)\subseteq \mathcal N^k(\mathcal B)$ and substituting $g = L_f^k\varphi$ and $\mathcal W = \mathcal N^k(\mathcal B)$ in \eqref{eq:supp-one-step}.
    By the principle of mathematical induction, $\operatorname{supp}(L_f^k \varphi)
    \subseteq \mathcal{N}^k(\mathcal{B})$
    for all $k \geq 0$.
\end{proof}

Let $d \coloneqq \mathrm{dist}_{\mathcal{G}(A)}(\mathcal{B}, \mathcal{P}) = \min\{k: \mathcal{N}^k(\mathcal{B})\cap\mathcal{P}\neq \emptyset\}$ denote the shortest-path distance from~$\mathcal{B}$
to~$\mathcal{P}$ in the influence graph, with the convention $d = 0$
when $\mathcal{B} \cap \mathcal{P} \neq \emptyset$.
The following theorem is the key result for diagnosing the relative degree of a candidate barrier function $\varphi$ with respect to \eqref{eq:true-dynamics}.

\begin{theorem}[Relative degree from graph distance]%
  \label{thm:relative-degree}
  In \eqref{eq:true-dynamics}, under Assumption~\ref{ass:local-dep},
  consider a candidate barrier $\varphi: \mathcal D \to \mathbb R$:
  \begin{enumerate}[\upshape(i)]
    \item \label{item:rd-lower}
    \emph{Lower bound.}\;
        {For all $k < d$,\,
        $L_G L_f^k \varphi \equiv 0$ on~$\mathcal{D}$.
        Consequently, $\varphi$ has relative degree
        $r \geq d{+}1$ at every $x_0 \in \mathcal{D}$.}

    \item \label{item:rd-attain}
        \emph{Attainment (scalar subsystems).}\;
          Under Assumption~\ref{ass:convexity} (convexity),
          {suppose the structure matrices $J$, $R$, $Q$, $G$ are 
          constant, every subsystems are single state $n_i = 1$, $\forall i \leq N$, 
          and $\rho = (i_0, \dots, i_d)$ is the unique shortest path from~$\mathcal{B}$ to 
          $\mathcal{P}$, 
          %
          Then, $L_G L_f^d\varphi(x) \neq 0$ and the relative degree is
          $r = d + 1$.}
    
        
      \item \label{item:obstruction} 
          {\emph{Structural obstruction.}\;
          If no path exists between $\mathcal{B}$ and $\mathcal{P}$ in $\mathcal{G}(A)$, i.e., $d = \infty$, then 
          $L_G L_f^k \varphi \equiv 0$ for all $k \geq 0$.
          Moreover, 
          smooth static feedback that 
          $\kappa : \mathcal{D} 
          \to \mathbb{R}^m$ with modified drift 
          $\tilde{f} = f + G\kappa$,\;
          $L_G L_{\tilde{f}}^k \varphi \equiv 0,\; \forall k \geq 0$.
          }
  \end{enumerate}
\end{theorem}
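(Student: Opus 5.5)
Parts (i) and (iii) follow directly from Lemma~\ref{lem:propagation}. For (i), $L_G L_f^k\varphi = \nabla(L_f^k\varphi)^\top G = \sum_{i} \nabla_{x_i}(L_f^k\varphi)^\top G_i$. A nonzero term needs $i \in \operatorname{supp}(L_f^k\varphi)\cap\mathcal P$. By Lemma~\ref{lem:propagation}, $\operatorname{supp}(L_f^k\varphi)\subseteq\mathcal N^k(\mathcal B)$, and for $k<d$ the definition of $d$ gives $\mathcal N^k(\mathcal B)\cap\mathcal P=\emptyset$. So every summand vanishes identically on $\mathcal D$. Applying Definition~\ref{def:relative-degree} with $U=\mathcal D$ then shows that no $r\le d$ can qualify, hence $r\ge d+1$.

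For (iii), $d=\infty$ means $\mathcal N^k(\mathcal B)\cap\mathcal P=\emptyset$ for every $k$, and the same argument gives $L_GL_f^k\varphi\equiv0$ for all $k$. For the feedback claim, let $\mathcal C$ be the union of the connected components of $\mathcal G(A)$ that meet $\mathcal B$; by assumption $\mathcal C\cap\mathcal P=\emptyset$. I will show by induction that $\operatorname{supp}(L_{\tilde f}^k\varphi)\subseteq\mathcal C$. The key observation is that $G\kappa$ has nonzero rows only in the blocks $i\in\mathcal P$. Hence for $g$ with $\operatorname{supp}(g)\subseteq\mathcal C$ we get
\[
L_{\tilde f}g=\sum_{i\in\mathcal C}\nabla_{x_i}g^\top\Bigl(f_i+G_i\kappa\Bigr)=\sum_{i\in\mathcal C}\nabla_{x_i}g^\top f_i = L_f g,
\]
since $G_i=0$ for $i\in\mathcal C$. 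The one-step estimate \eqref{eq:supp-one-step} from the proof of Lemma~\ref{lem:propagation} keeps the support inside $\mathcal N^1(\mathcal C)=\mathcal C$, because $\mathcal C$ is closed under adjacency. Thus $L_GL_{\tilde f}^k\varphi=\sum_{i\in\mathcal C}\nabla_{x_i}(\cdot)^\top G_i\equiv0$. Notably, $\kappa$ may depend on all of $x$, yet it never enters, because its contribution is always multiplied by a zero gradient block.

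Part (ii) is the main obstacle. With scalar subsystems and constant $A$, $f(x)=Ae(x)$ with $e_j=H_j'(x_j)$, and strict convexity gives $H_j''\ge0$. I would prove by induction along the path $\rho$ a leading-term formula. To set it up, let $\operatorname{supp}(\varphi)\ni i_0$, and for $\ell\le d$ write $\partial_{x_{i_\ell}}L_f^\ell\varphi$. The claim is that this derivative equals
\[
\partial_{x_{i_0}}\varphi\prod_{s=0}^{\ell-1}A_{i_s i_{s+1}}H''_{i_{s+1}}(x_{i_{s+1}}),
\]
plus terms that vanish. The reason is that reaching node $i_\ell$ at step $\ell$ requires moving one hop along each step. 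Uniqueness of the shortest path, together with the fact that $i_\ell$ has distance exactly $\ell$ from $\mathcal B$, guarantees that $\rho$ is the only contributing chain: any other chain would produce a second shortest path or would fail to reach $i_\ell$ in $\ell$ steps. Then
\[
L_GL_f^d\varphi=\partial_{x_{i_d}}L_f^d\varphi\,G_{i_d},
\]
because the other port nodes lie outside $\mathcal N^d(\mathcal B)$ except $i_d$, again by uniqueness. This quantity is nonzero wherever $\partial_{x_{i_0}}\varphi\neq0$ and $H''_{i_s}\neq0$.

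The delicate point is that strict convexity only gives $H''\ge0$, with possible isolated zeros, and $\partial_{x_{i_0}}\varphi$ may also vanish at some points. So I would establish nonvanishing at generic $x$, or assume $H''>0$ (true in Example~\ref{ex:lc-ladder}). Any such exceptional points would be relegated to the degeneracy set $\mathcal Z$ of Definition~\ref{def:port-transversality}. Combining this with (i) gives $r=d+1$.
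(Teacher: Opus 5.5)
Your proposal follows the paper's proof in all three parts. Parts (i) and (iii) use the same support-propagation argument. Your $\mathcal C$ is the paper's reachable set $\bigcup_{k}\mathcal N^k(\mathcal B)$, and your identity $L_{\tilde f}g=L_fg$ for $g$ supported in $\mathcal C$ is the inductive step of the paper's proof that $L_{\tilde f}^k\varphi=L_f^k\varphi$. In (ii) the paper runs exactly your recursion $\partial_{x_{i_k}}L_f^k\varphi=\partial_{x_{i_{k-1}}}L_f^{k-1}\varphi\,A_{i_{k-1}i_k}H''_{i_k}$. It drops the second-derivative term because $i_k\notin\mathcal N^{k-1}(\mathcal B)$, and uses uniqueness of $\rho$ to single out $i_{k-1}$ as the only contributing neighbor.

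Your closing caveat is correct, and here you are more careful than the paper. The paper concludes nonvanishing by asserting $\nabla^2 H_{i_l}>0$ from Assumption~\ref{ass:convexity}, which is false for, e.g., $H(x)=x^4$ at $0$. It also never addresses the leading factor $\partial_{x_{i_0}}\varphi$. Because the product formula is an exact pointwise identity, what is actually true is this: $\varphi$ has relative degree $d+1$ at $x_0$ if and only if $\partial_{x_{i_0}}\varphi(x_0)\neq0$ and $H''_{i_l}\neq0$ at $x_0$ for $l=1,\dots,d$. The unconditional claim in (ii) therefore needs these as hypotheses. They do hold everywhere in Example~\ref{ex:lc-ladder}, where $\varphi=q_{\max}-q_1$ and the storages are quadratic.

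Your ``generic $x$'' fallback does not repair this, for two reasons.
\begin{itemize}
\item The zero set of $H''$ for a strictly convex $C^2$ function is closed with empty interior. It need not consist of isolated points or even have measure zero (let $H''$ vanish exactly on a fat Cantor set), so absorbing it into a null degeneracy set $\mathcal Z$ is not automatic.
\item More seriously, $i_0\in\operatorname{supp}(\varphi)$ only means $\partial_{x_{i_0}}\varphi\not\equiv0$. This factor can vanish on an open set, where $L_GL_f^d\varphi\equiv0$ and $r\neq d+1$.
\end{itemize}
If $\mathcal B=\{i_0\}$, the standing hypothesis $\nabla\varphi\neq0$ on $\partial\mathcal A$ does give $\partial_{x_{i_0}}\varphi\neq0$, but only on $\partial\mathcal A$.
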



\begin{proof}
    (\ref{item:rd-lower}): 
        By Lemma~\ref{lem:propagation},
        $\operatorname{supp}(L_f^k \varphi) \subseteq
        \mathcal{N}^k(\mathcal{B})$. If $k < d$, then
        $\mathcal{N}^k(\mathcal{B}) \cap \mathcal{P} = \emptyset$ by
        definition of shortest-path distance. 
        Hence, $\operatorname{supp}(L_f^k \varphi) \cap \mathcal{P} = \emptyset$ and
        $\nabla_{x_j}(L_f^k \varphi)
        \equiv 0$ for all $j \in \mathcal{P}$.
        Therefore, $L_G L_f^k \varphi = \sum_{j \in \mathcal{P}}
        \nabla_{x_j}(L_f^k \varphi)^\top G_j = 0$.

    {(\ref{item:rd-attain})}:
        The goal here is to show that under the three conditions: constant structure, scalar subsystems, and existence of unique shortest path, $L_GL_f^d \varphi \not = 0$.
        First, we write:
        \begin{equation}\label{eq:part2_LGLf_sum}
            L_GL_f^d \varphi = \nabla L_f^d\varphi^\top G = \sum_{j\in \mathcal P} \partial_{x_j}(L_f^d\varphi) G_j,
        \end{equation}
        which highlights that only $x_j$ with $j \in \mathcal P$ matters in $\nabla L_f^d\varphi$.
        Furthermore, since $\rho = (i_0, \dots, i_d)$ is the unique shortest path from~$\mathcal{B}$ to $\mathcal{P}$, equation~\eqref{eq:part2_LGLf_sum} reduces to:
        $$
            L_GL_f^d \varphi = \partial_{x_{i_d}}(L_f^d\varphi) G_{i_d}.
        $$
        Since $G_{i_d}$ is a scalar constant with $i_d \in \mathcal P$, $G_{i_d} \neq 0$.
        Therefore, showing $\partial_{x_{i_d}}(L_f^d\varphi) \neq 0$ suffices.
        Let $\varphi_k \coloneqq L_f^k\varphi$ for all $0 \leq k \leq d$. 
        By the product rule, $\partial_{x_{i_k}}\varphi_k$ can be expanded into:
        \begin{align*}
            \varphi_k &= \sum_{i \in \operatorname{supp}(\varphi_k)}\!\!\!\!\!\frac{\partial \varphi_{k-1}}{\partial x_i} \sum_{j:\{i, j\}\in \mathcal E_A}\!\!\!\!\!A_{ij}\frac{\partial H_j(x_j)}{\partial x_j}\\
            \frac{\partial \varphi_k}{\partial {x_{i_k}}} &= \sum_{i, j} \frac{\partial^2 \varphi_{k-1}}{\partial x_{i_k}\partial x_i} A_{ij}\nabla H_j(x_j) +\\
            &\qquad \sum_{i, j} \frac{\partial \varphi_{k-1}}{\partial x_i} \frac{\partial A_{ij}}{\partial {x_{i_k}}}\nabla H_j(x_j) +\\
            &\qquad \sum_{i, j} \frac{\partial \varphi_{k-1}}{\partial x_i} A_{ij}\frac{\partial^2 H_j(x_j)}{\partial x_{i_k}\partial x_j}
        \end{align*}
        where the first term is zero because $\frac{\partial \varphi_{k-1}}{\partial x_{i_k}} = 0$ by Lemma~\ref{lem:propagation}.
        The second term is zero by constant $A_{ij}$.
        By Property~\ref{ass:separable}, $\frac{\partial^2 H_j(x_j)}{\partial x_{i_k}\partial x_j} \neq 0$ iff $j = i_k$, hence, the last term and the expression reduces to 
        \begin{equation}\label{eq:simplified_sum_partial_varphi}
            \frac{\partial \varphi_k}{\partial {x_{i_k}}} = \sum_{i} \frac{\partial \varphi_{k-1}}{\partial x_i} A_{ii_k}
        \nabla^2_{x_{i_k}}\! H_{i_k}(x_{i_k}).
        \end{equation}
        To further simplify the R.H.S. in \eqref{eq:simplified_sum_partial_varphi}, we note that any non-zero $\partial_{x_i} \varphi_{k-1} A_{ii_k}
        \nabla^2_{x_{i_k}}\! H_{i_k}(x_{i_k})$ requires $A_{ii_k} \neq 0$ and $i \in \operatorname{supp}(\varphi_{k-1})$.
        To have $A_{ii_k} \neq 0$, we need $i$ to be adjacent to $i_k$, which forces $k-1 \leq \mathrm{dist}_{\mathcal{G}(A)}(\mathcal{B}, \{i\}) \leq k+1$.
        To have $i \in \operatorname{supp}(\varphi_{k-1})$, we need $\mathrm{dist}_{\mathcal{G}(A)}(\mathcal{B}, \{i\}) \leq k-1$.
        Together, it requires $\mathrm{dist}_{\mathcal{G}(A)}(\mathcal{B}, \{i\}) = k-1$.
        Moreover, since $i$ is adjacent to $i_k$ and $\rho = (i_0, \dots, i_d)$ is the unique shortest path between $\mathcal B$ and $\mathcal P$, $i=i_{k-1}$.
        Equation~\eqref{eq:simplified_sum_partial_varphi} is further simplified to:
        \begin{equation}\label{eq:simplified_partial_varphi}
            \frac{\partial \varphi_k}{\partial {x_{i_k}}} = \frac{\partial \varphi_{k-1}}{\partial x_{i_{k-1}}} A_{{i_{k-1}}i_k}\nabla^2_{x_{i_k}}\! H_{i_k}(x_{i_k}).
        \end{equation}
        Let $\psi_k \coloneqq \partial_{x_{i_k}}\varphi_k$ for $k = 0, \dots, d$, equation~\ref{eq:simplified_partial_varphi} can be expanded iteratively:
        \begin{align*}
            \psi_k  &= \psi_{k-1} A_{{i_{k-1}}i_k}\nabla^2_{x_{i_k}}\! H_{i_k}(x_{i_k})\\
                    &= \psi_{0} \prod_{l=1}^k A_{i_{l-1}i_l}\nabla_{x_{i_l}}^2 H(x_{i_l}).
        \end{align*}
        Set $k=d$, since each $A_{i_{l-1}i_l} \neq 0$ on the path $\rho$, and $\nabla_{x_{i_l}}^2 H(x_{i_l}) > 0$ by Assumption~\ref{ass:convexity}, $\psi_d = \partial_{x_{i_d}}(L_f^d\varphi) \neq 0$.

    (\ref{item:obstruction}): 
        If $\mathrm{dist}_{\mathcal{G}(A)}(\mathcal{B}, \mathcal{P}) = \infty$, let
        $\mathcal{C} = \bigcup_{k \geq 0}\mathcal{N}^k(\mathcal{B})$ be
        the set of all nodes reachable from~$\mathcal{B}$
        in~$\mathcal{G}(A)$. Then $d = \infty$ is equivalent to
        $\mathcal{C} \cap \mathcal{P} = \emptyset$.
        By Lemma~\ref{lem:propagation},
        $\operatorname{supp}(L_f^k \varphi) \subseteq
        \mathcal{N}^k(\mathcal{B}) \subseteq \mathcal{C}$, hence
        $\operatorname{supp}(L_f^k \varphi) \cap \mathcal{P} = \emptyset$ and
        thus $L_G L_f^k \varphi \equiv 0$ for all $k \geq 0$.
        
        For the memoryless feedback, consider any smooth static feedback
        $u = \kappa(x)$, giving $\tilde{f} = f + G\kappa$.
        Since
        $\mathcal{C} \cap \mathcal{P} = \emptyset$, every node
        $i \in \mathcal{C}$ satisfies $G_i \equiv 0$. 
        Hence, the feedback
        injection $G\kappa$ vanishes on~$\mathcal{C}$. We show
        $L_{\tilde{f}}^k\varphi = L_f^k \varphi$ for all $k\geq 0$ by induction. 
        The base case is trivial. 
        For the inductive case, assume $L_{\tilde{f}}^{k-1} \varphi = L_f^{k-1} \varphi$. 
        Then,
        \[
          L_{\tilde{f}}^k \varphi
          = \nabla(L_f^{k-1}\varphi)^\top (f + G\kappa)
          = L_f^k \varphi + \nabla(L_f^{k-1}\varphi)^\top G\kappa.
        \]
        By Lemma~\ref{lem:propagation},
        $\operatorname{supp}(L_f^{k-1}\varphi) \subseteq
        \mathcal{N}^{k-1}(\mathcal{B}) \subseteq \mathcal{C}$, and since
        $G_i \equiv 0$ on~$\mathcal{C}$, the term
        $\nabla(L_f^{k-1}\varphi)^\top G\kappa \equiv 0$. 
        Thus, $L_{\tilde{f}}^k \varphi = L_f^k \varphi$ for all $k \geq 0$, and consequently
        $L_G L_{\tilde{f}}^k \varphi = L_G L_f^k \varphi \equiv 0$ for all
        $k \geq 0$.
\end{proof}

Theorem~\ref{thm:relative-degree} identifies the relative degree from graph distance. 
We now propose a framework to reshape a high relative degree safety specification $\varphi$ into a port-transversal barrier informed by the influence graph structure.
\subsection{Port-Transversal Barrier Synthesis}

In this section, we consider the energy-shaping problem of a relative degree two safety specification $\varphi$.
The port-transversal barrier is a relative degree one candidate CBF that brings the problem into a form amenable to CBF feasibility analysis.

\subsubsection{Port-Energy Shaping of the Barrier}
For mechanical systems, the energy-aware barrier
\(
\varphi(q)-\tfrac{1}{\gamma}T(q,p)
\)
is well known for configuration constraints~\cite{singletary2021safety}, but its derivation relies on the canonical mechanical coordinates. 
We isolate the underlying principle as the \emph{barrier-insulating blanket} and generalize it beyond mechanics. 

\begin{definition}[Barrier-insulating blanket]%
\label{def:insulating-blanket}
Given the barrier node set
$\mathcal{B} = \operatorname{supp}(\varphi)$ and the influence graph
$\mathcal{G}(A) = (\mathcal{V}, \mathcal{E}_A)$, the
\emph{barrier-insulating blanket} is the vertex boundary:
\begin{equation}\label{eq:insulating-blanket}
\partial\mathcal{B}
\coloneqq
\bigl\{
  j \in \mathcal{V} \setminus \mathcal{B}
  :
  \exists\, i \in \mathcal{B}
  \text{ with }
  \{i,j\} \in \mathcal{E}_A
\bigr\}.
\end{equation}
We say that the barrier is \emph{port-insulated} if
$\partial\mathcal{B} \subseteq \mathcal{P}$.
\end{definition}

{In Example~\ref{ex:lc-ladder}, with barrier node set $\mathcal{B} = \{C_1\}$, the barrier-insulating blanket $\partial\mathcal{B} = \{L_1, L_2\} \subseteq \mathcal{P}$, so the constraint induced barrier $\varphi = q_{\max} - q_1$ is port-insulated.}
We note that by the construction of an influence graph, each subsystem dynamic decomposes as:
\begin{equation}\label{eq:barrier-flow-decomp}
\begin{split}
    \dot{x}_i\big|_{u=0}
&=
\underbrace{%
  \sum_{j \in \partial\mathcal{B}} Q_{ij}(x)\, e_j
}_{\text{blanket-mediated}}+
\underbrace{%
  \sum_{\substack{j \in \mathcal{B}\setminus \{i\}}}
  Q_{ij}\, e_j
  + A_{ii}(x) e_i
}_{\text{intra-barrier}},
\end{split}
\end{equation}
with $i \in \mathcal{B}$ and $e_j \coloneqq \nabla H_j(x_j)$ denoting the subsystem effort factor, where we have used
$A_{ij} = Q_{ij}$ for $i \neq j$ by zero block diagonal entries of $Q$ and $A_{ii} = J_{ii}-R_{ii}$
by block-diagonality of~$J$ and $R$.
No terms from $\mathcal{V} \setminus (\mathcal{B} \cup
\partial\mathcal{B})$ appear, since
$A_{ij} \equiv 0$ for those pairs.
Decomposition~\eqref{eq:barrier-flow-decomp} identifies
\(\partial\mathcal{B}\) as the source of the uncontrolled barrier
dynamics, and thus the natural set of port nodes to leverage for 
reshaping the barrier.

\begin{definition}[Port-transversal barrier]%
\label{def:pt-barrier-blanket}
Let $\varphi : \mathcal{D} \to \mathbb{R}$ be a function with subsystem support
$\operatorname{supp}(\varphi) = \mathcal{B}$ that admits an insulating blanket $\partial\mathcal{B}$ in the sense of
Definition~\ref{def:insulating-blanket}.
Suppose $\partial\mathcal{B} \cap \mathcal{P} \neq \emptyset$.
For weights $\beta_j > 0$ and shaping parameter $\gamma > 0$,
the \emph{port-transversal barrier} w.r.t. $\varphi$ is:
\begin{equation}\label{eq:pt-barrier-blanket}
h_\gamma(x)
\coloneqq
\varphi(x)
-
\frac{1}{\gamma}
\sum_{j \in \partial\mathcal{B} \cap \mathcal{P}}
\beta_j\,\bar{H}_j(x_j),
\end{equation}
where
$\bar{H}_j(x_j) \coloneqq H_j(x_j) - H_j(x_j^\star) \geq 0$
is the shifted local storage at the port node~$j$.
\end{definition}


{Recall from Assumption\ref{ass:convexity} that each $x_j^\star$ denotes the unique minimizer of the local storage~$H_j$.
Now, notice that since the input influences $h_\gamma$ through each local storage~$H_j$, the barrier loses first-order input sensitivity when every blanket effort $\nabla H_j(x_j)$ vanishes.
We collect these states into the \emph{blanket degeneracy set}:
\begin{equation}\label{eq:blanket-degeneracy}
\mathcal{Z}_{\partial\mathcal{B}}
\coloneqq
\{x \in \mathcal{D} : L_G h_\gamma(x) = 0\}.
\end{equation}
Managing CBF feasibility near this set is the main challenge addressed in Section~\ref{sec:pt-feasibility}.
Fig.~\ref{fig:series_rlc_pt_barrier} visualizes port-transversal barriers on a series RLC circuit.}

\begin{figure}[t]
    \centering
    \includegraphics[width=\linewidth]{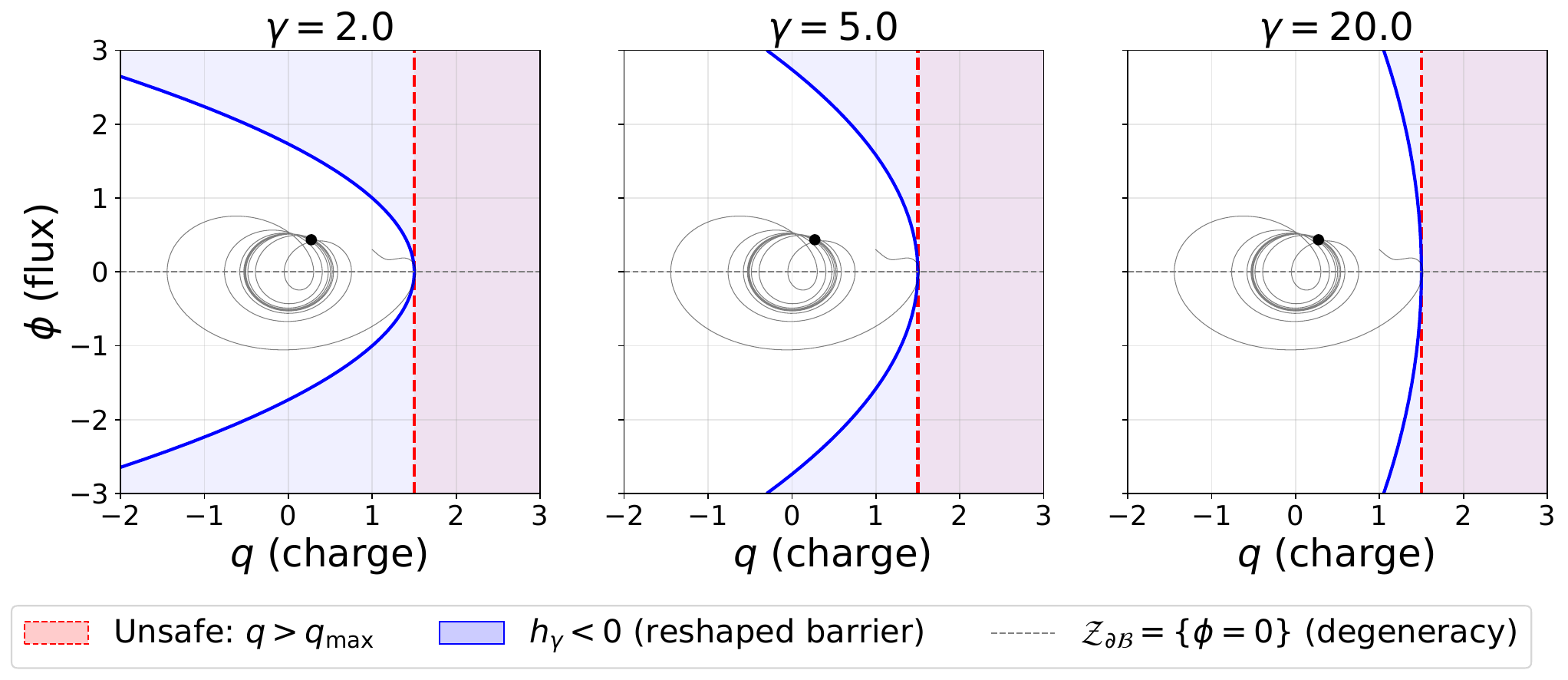}
    \caption{Port-transversal barrier for a charge constraint $q_{\max} = 1.5$ on the series RLC circuit.
        The safety specification $\varphi(x) = q_{\max} - q$ (dashed red) defines the allowable set $\mathcal{A} = \{q \leq q_{\max}\}$.
        The port-transversal barrier $h_\gamma(x) = \varphi(x) - \frac{1}{\gamma}\,\bar{H}_L(\phi)$ reshapes the safety specification by subtracting the shifted inductor storage $\bar{H}_L = \phi^2/(2L)$ from the insulating blanket
        $\partial\mathcal{B} = \{L\} = \mathcal{P}$ (blue contour).
        As $\gamma$ increases, the energy penalty shrinks and the safe set
        $\mathcal{S}_\gamma = \{h_\gamma \geq 0\}$ expands toward~$\mathcal{A}$,
        consistent with the monotonicity
        $\mathcal{S}_{\gamma_1} \subseteq \mathcal{S}_{\gamma_2}$ for
        $\gamma_1 \leq \gamma_2$ (Lemma~\ref{lem:pt-synthesis}\,(\ref{item:safe-set})).
        The black dashed line is the degeneracy set $\mathcal Z_{\partial \mathcal B}$, and
        the black trajectory shows the driven system under nominal input with the black dot the final state.
        }
    \label{fig:series_rlc_pt_barrier}
    \vspace{-1.5em}
\end{figure}

\begin{lemma}[Port-transversal barrier synthesis]%
\label{lem:pt-synthesis}
Let $\varphi : \mathcal{D} \to \mathbb{R}$ be a function with
$\operatorname{supp}(\varphi) = \mathcal{B}$ that admits $\partial\mathcal{B}$ in the sense of
Definition~\ref{def:insulating-blanket}.
Suppose $\partial\mathcal{B} \cap \mathcal{P} \neq \emptyset$,
and let $h_\gamma$ be the port-transversal barrier
of Definition~\ref{def:pt-barrier-blanket} with weights
$\{\beta_j\}_{j \in \partial\mathcal{B} \cap \mathcal{P}}$
and shaping parameter~$\gamma > 0$.
Then:
\begin{enumerate}[\upshape(i)]
  \item \label{item:pt-restored}
    \emph{Port transversality.}\;
    The input Lie derivative
    $L_G h_\gamma(x) \in \mathbb{R}^{1 \times m}$ satisfies:
    \begin{equation}\label{eq:Lg-hgamma}
      L_G h_\gamma(x)
      \;=\;
      -\frac{1}{\gamma}
      \sum_{j \in \partial\mathcal{B} \cap \mathcal{P}}
      \beta_j\,
      \nabla H_j(x_j)^\top G_j(x),
    \end{equation}
    which is nonzero for all
    $x \in \partial\mathcal{S}_\gamma \setminus
    \mathcal{Z}_{\partial\mathcal{B}}$.
    In particular, $h_\gamma$ has relative degree one on
    $\mathcal{D} \setminus \mathcal{Z}_{\partial\mathcal{B}}$.

    \item \label{item:safe-set}
      \emph{Safe-set inclusion.}\;
      $\mathcal{S}_\gamma \subseteq \mathcal{A}$.
      Moreover,
      $\mathcal{S}_{\gamma_1} \subseteq \mathcal{S}_{\gamma_2}$
      whenever $\gamma_1 \leq \gamma_2$, and
      \(
      \operatorname{int}(\mathcal{S})
      \;\subseteq\;
      \bigcup_{\gamma > 0}\mathcal{S}_\gamma
      \;\subseteq\;
      \mathcal{A}.
      \)

  \item \label{item:drift-decomp}
    \emph{Drift decomposition.}\;
    The drift Lie derivative decomposes as:
    \begin{equation}\label{eq:Lf-hgamma-decomp}
      L_f h_\gamma
      =
      L_f \varphi\big|_{\text{intra-}\mathcal{B}}
      + \Gamma_{\mathrm{blanket}}
      + \Gamma_{\mathrm{diss}}
      - \Gamma_{\mathrm{coupling}},
    \end{equation}
    where
    \begin{equation}\label{eq:drift-terms}
    \begin{aligned}
      L_f \varphi\big|_{\text{intra-}\mathcal{B}}
      &\coloneqq
      \sum_{i \in \mathcal{B}}
      \nabla_{x_i} \varphi^\top
      \Bigl(
        \sum_{\substack{j \in \mathcal{B}\setminus\{i\}}}
        Q_{ij}\, e_j+
        A_{ii}\, e_i
      \Bigr),
      \\
      \Gamma_{\mathrm{blanket}}
      &\coloneqq
      \sum_{i \in \mathcal{B}}
      \nabla_{x_i} \varphi^\top
      \sum_{j \in \partial\mathcal{B}}
      A_{ij}\, e_j,
      \\
      \Gamma_{\mathrm{diss}}
      &\coloneqq
      \frac{1}{\gamma}
      \sum_{j \in \partial\mathcal{B} \cap \mathcal{P}}
      \beta_j\, e_j^\top R_{jj}\, e_j
      \;\geq\; 0,
      \\
      \Gamma_{\mathrm{coupling}}
      &\coloneqq
      \frac{1}{\gamma}
      \sum_{j \in \partial\mathcal{B} \cap \mathcal{P}}
      \beta_j\, e_j^\top
      \sum_{k \neq j} A_{jk}\, e_k.
    \end{aligned}
    \end{equation}
\end{enumerate}
\end{lemma}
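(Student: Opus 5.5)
The three parts are direct computations from the definition of $h_\gamma$ in \eqref{eq:pt-barrier-blanket}. The two facts used throughout are Property~\ref{ass:separable} and the block structure of $G$, $J$, $R$, $Q$; part~(\ref{item:pt-restored}) also needs the barrier to be port-insulated.

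\emph{Part (\ref{item:pt-restored}).} I would differentiate $h_\gamma$ blockwise. Since $G$ is block diagonal, $L_G h_\gamma = \sum_{i} \nabla_{x_i} h_\gamma^\top G_i$. The first term of $h_\gamma$ gives $L_G\varphi = \sum_{i\in\mathcal B}\nabla_{x_i}\varphi^\top G_i$. To see that this vanishes, I would use port-insulation, $\partial\mathcal B\subseteq\mathcal P$, together with the relative-degree-two setting. If $\mathcal B\cap\mathcal P\neq\emptyset$, then $d=0$, which contradicts relative degree two. Hence $d\ge 1$, and Theorem~\ref{thm:relative-degree}(\ref{item:rd-lower}) gives $L_G\varphi\equiv 0$. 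The subtracted term depends only on $x_j$ for $j\in\partial\mathcal B\cap\mathcal P$, and $\nabla_{x_j}\bar H_j=\nabla H_j(x_j)$. This gives \eqref{eq:Lg-hgamma}.

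Nonvanishing off $\mathcal Z_{\partial\mathcal B}$ is then immediate from the definition \eqref{eq:blanket-degeneracy}. That $\mathcal Z_{\partial\mathcal B}\cap\partial\mathcal S_\gamma$ has measure zero in $\partial\mathcal S_\gamma$ is the real content of port-transversality in Definition~\ref{def:port-transversality}. I would argue it as follows. By Assumption~\ref{ass:convexity}, $\nabla H_j$ vanishes only at $x_j^\star$. Generically, $\nabla H_j^\top G_j$ vanishes on a proper analytic or smooth subvariety transverse to the level set of $h_\gamma$.

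This measure-zero claim is the step I expect to be the main obstacle. In full generality it needs a regularity hypothesis, such as $\nabla h_\gamma\neq 0$ on $\partial\mathcal S_\gamma$ together with $\mathcal Z_{\partial\mathcal B}$ being a lower-dimensional set. I would state this explicitly rather than prove it abstractly. Relative degree one on $\mathcal D\setminus\mathcal Z_{\partial\mathcal B}$ then follows from Definition~\ref{def:relative-degree} with $r=1$.

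\emph{Part (\ref{item:safe-set}).} Since $\beta_j>0$, $\gamma>0$ and $\bar H_j\ge 0$, we have $h_\gamma\le\varphi$, so $\mathcal S_\gamma\subseteq\mathcal A$. For $\gamma_1\le\gamma_2$, $1/\gamma_1\ge 1/\gamma_2$, hence $h_{\gamma_1}\le h_{\gamma_2}$, which gives monotonicity. For the union, I read $\operatorname{int}(\mathcal S)$ as $\{\varphi>0\}$. Fix such an $x$. The quantity $\sum\beta_j\bar H_j(x_j)$ is a finite number, so choosing $\gamma > \sum\beta_j\bar H_j(x_j)/\varphi(x)$ gives $h_\gamma(x)>0$. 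The upper inclusion follows from the first claim.

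\emph{Part (\ref{item:drift-decomp}).} I would compute $L_f h_\gamma=\nabla\varphi^\top A e-\frac1\gamma\sum_j\beta_j e_j^\top\dot x_j|_{u=0}$ with $e=\nabla H$.

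For the first term, I use \eqref{eq:barrier-flow-decomp} for each $i\in\mathcal B$. This splits $\nabla\varphi^\top Ae$ into the intra-$\mathcal B$ part and the blanket part $\Gamma_{\mathrm{blanket}}$; no other nodes contribute, by the definition of $\mathcal E_A$.

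For the second term, I write $\dot x_j|_{u=0}=\sum_k A_{jk}e_k$ and split off $k=j$, where $A_{jj}=J_{jj}-R_{jj}$. Skew symmetry of $J_{jj}$ kills $e_j^\top J_{jj}e_j$. The $-R_{jj}$ term, with the outer minus sign, yields $+\Gamma_{\mathrm{diss}}\ge 0$ because $R_{jj}\succeq 0$. The remaining $k\neq j$ terms give $-\Gamma_{\mathrm{coupling}}$.
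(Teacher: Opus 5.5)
Your proposal is correct and follows the paper's own route: linearity of $L_G$ together with $\nabla_{x_j}\bar H_j=\nabla H_j(x_j)$ for (i), $\bar H_j\ge 0$ and monotonicity in $1/\gamma$ for (ii), and the split \eqref{eq:barrier-flow-decomp} plus skew-symmetry of $J_{jj}$ for (iii). In (i) you are more careful than the paper. Its proof dismisses port nodes $\ell\in\mathcal B\cap\mathcal P$ by an invalid appeal to $\partial\mathcal B\cap\mathcal B=\emptyset$, even though \eqref{eq:Lg-hgamma} genuinely requires $L_G\varphi\equiv 0$. Two corrections: port-insulation plays no role there, since what you need is $L_G\varphi\equiv 0$, which follows from $\mathcal B\cap\mathcal P=\emptyset$ or directly from relative degree two via Definition~\ref{def:relative-degree}, and $d=0$ by itself does not contradict $r=2$. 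Also, the lemma does not assert that $\mathcal Z_{\partial\mathcal B}\cap\partial\mathcal S_\gamma$ has measure zero, so no extra regularity hypothesis is required.
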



\begin{proof}
See Appendix~\ref{sec:proof-pt-synthesis}.
\end{proof}

{To illustrate the construction, consider again the LC ladder of Example~\ref{ex:lc-ladder} with barrier node set $\mathcal{B} = \{C_1\}$ and insulating blanket $\partial\mathcal{B} = \{L_1, L_2\} = \mathcal{P}$.
Applying Definition~\ref{def:pt-barrier-blanket} with equal weights $\beta_{L_1} = \beta_{L_2} = 1$, the port-transversal barrier is:
\[
  h_\gamma(x)
  = \underbrace{(q_{\max} - q_1)}_{\varphi(x)}
  - \frac{1}{\gamma}
    \Bigl(
      \underbrace{\frac{\phi_1^2}{2L_1}}_{\bar{H}_{L_1}(\phi_1)}
      +
      \underbrace{\frac{\phi_2^2}{2L_2}}_{\bar{H}_{L_2}(\phi_2)}
    \Bigr),
\]
where each subsystem storage $\bar{H}_{L_j} = \phi_j^2/(2L_j)$ penalizes magnetic energy in the corresponding inductor relative to its rest state $\phi_j^\star = 0$.
This example reflects the general principle: the influence graph identifies $\partial\mathcal{B}\cap \mathcal{P}$ as the minimal set of subsystems whose local storages must be leveraged in the reshaped barrier, independent of the physical domain.}

\subsubsection{Port-Transversal Barrier Feasibility}
\label{sec:pt-feasibility}

The synthesis of Lemma~\ref{lem:pt-synthesis} produces a barrier
of relative degree one with respect to \eqref{eq:true-dynamics} on
$\mathcal{D} \setminus \mathcal{Z}_{\partial\mathcal{B}}$, which is the first step to demonstrate its CBF feasibility.
It remains to be shown that the CBF inequality is feasible under bounded inputs.
We collect the required conditions into two groups:
port-side regularity and barrier-side regularity. 

\vspace{0.5em}
\noindent\textit{Port-Side Regularity.} Port-side regularity concerns the input structure at the blanket.
\begin{assumption}[Port-insulated barrier]%
\label{ass:port-insulated}
The barrier-insulating blanket satisfies
$\partial\mathcal{B} \subseteq \mathcal{P}$.
\end{assumption}

Evaluating the drift decomposition~\eqref{eq:Lf-hgamma-decomp} at states where every blanket effort vanishes, i.e., $e_j(x) = 0$ for all $j \in \partial\mathcal{B}$, 
the terms $\Gamma_{\mathrm{blanket}}$, $\Gamma_{\mathrm{diss}}$, and $\Gamma_{\mathrm{coupling}}$ vanish since each contains a blanket effort as a factor, leaving only the intra-barrier residual.
Since $\Gamma_{\mathrm{blanket}}$,
$\Gamma_{\mathrm{diss}}$, and $\Gamma_{\mathrm{coupling}}$ all
contain a blanket effort~$e_j$ as a factor and vanish when
$e_j = 0$ for every $j \in \partial\mathcal{B}$:
\begin{equation}\label{eq:residual-drift}
\begin{aligned}
    L_f h_\gamma(x)\big|_{\mathcal{Z}_{\partial\mathcal{B}}}
    =\sum_{i \in \mathcal{B}}
    \nabla_{x_i} \varphi(x)^\top
    \Bigl(
      \sum_{\substack{j \in \mathcal{B}\setminus \{i\}}}
      Q_{ij} e_j
      + A_{ii}(x) e_i
    \Bigr).
\end{aligned}
\end{equation}
\begin{assumption}[Blanket coercivity]%
\label{ass:blanket-coercivity}
Define the collective blanket effort norm
$\|e_{\partial\mathcal{B}}(x)\|
\coloneqq
\bigl(\sum_{j \in \partial\mathcal{B}}
\|e_j(x)\|^2\bigr)^{1/2}$.
There exists $\sigma_{\mathcal{B}} > 0$ such that:
\begin{equation}\label{eq:blanket-coercivity}
\|L_G h_\gamma(x)\|
\;\geq\;
\frac{\sigma_{\mathcal{B}}}{\gamma}\,
\|e_{\partial\mathcal{B}}(x)\|
\qquad \forall\, x \in \mathcal{D}.
\end{equation}
\end{assumption}

Assumption~\ref{ass:blanket-coercivity} ensures that the input
Lie derivative $L_G h_\gamma(x)$ can vanish only when every
blanket effort vanishes, and that away from this set the
control authority scales at least linearly with the collective
blanket effort.
Together with Assumption~\ref{ass:port-insulated}, the
degeneracy set reduces to:
\begin{equation}\label{eq:degen-reduced}
\mathcal{Z}_{\partial\mathcal{B}}
\!=\!
\bigl\{x\! \in \mathcal{D}\!
: e_j = 0,\forall j \in \partial\mathcal{B}\bigr\}
\!=\!
\bigl\{x\! : x_j = x_j^\star
,\forall j \in \partial\mathcal{B}\bigr\},
\end{equation}
where the first equality follows
from~\eqref{eq:blanket-coercivity} since
$\|L_G h_\gamma\| = 0$ forces
$\|e_{\partial\mathcal{B}}\| = 0$, and the second from strict
convexity of each~$H_j$ under
Assumption~\ref{ass:convexity}.
See~\cite[Appx.~VI-E]{leung2026port}
for further discussion on verifying blanket coercivity illustrated with Example~\ref{ex:lc-ladder}.

\vspace{0.5em}
\noindent\textit{Barrier-Side Regularity.}
In contrast, barrier-side regularity concerns the residual drift when input channels are silenced.

\begin{assumption}[Compact design safety boundary]%
\label{ass:compact}
The set $\partial\mathcal{S}_\gamma$ is compact.
\end{assumption}

Assumption~\ref{ass:compact} holds whenever the shifted storages $\bar{H}_j$ grow faster than~$\varphi$ along unbounded directions, which is the case, e.g., when each $H_j$ is strongly convex or has superlinear growth.
It fails if the storages grow only linearly and the
allowable set~$\mathcal{A}$ is unbounded.


\begin{assumption}[Dominated degeneracy]%
\label{ass:dom-degen}
There exists an open neighborhood~$U$ of
$\mathcal{Z}_{\partial\mathcal{B}} \cap
\partial\mathcal{S}_\gamma$
in~$\partial\mathcal{S}_\gamma$ and $c_0 \geq 0$ such that:
\begin{equation}\label{eq:dom-degen}
(-L_f h_\gamma(x))^+ \leq c_0 \|e_{\partial \mathcal B}(x)\|
\qquad \forall\, x \in U.
\end{equation}
\end{assumption}

Dominated degeneracy asserts that adversarial drifts, i.e., $f(x)$ such that $\nabla h_\gamma^\top f(x) < 0$, die out faster than the storage effort $e_{\partial \mathcal B}$ around the degeneracy set near the safety boundary.
With these assumptions in place, we construct the bounded input feasibility conditions.





\begin{lemma}[Drift envelope on the design boundary]%
    \label{lem:adverse-drift-envelope}
    Under Assumptions~\ref{ass:local-dep} (local dependence), \ref{ass:port-insulated} (port-insulated barrier), and
    \ref{ass:compact} (compact design safety boundary),
    there exist constants $M,c_1>0$ such that,
    \begin{equation}\label{eq:adverse-drift-envelope}
    (-L_f h_\gamma(x))^+
    \le
    M + c_1\,\|e_{\partial\mathcal B}(x)\|,
    \end{equation}
    for all $x\in \partial\mathcal S_\gamma$.
\end{lemma}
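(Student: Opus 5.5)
The plan is to start from the drift decomposition \eqref{eq:Lf-hgamma-decomp} of Lemma~\ref{lem:pt-synthesis}(\ref{item:drift-decomp}) and bound each term separately on the compact set $\partial\mathcal S_\gamma$ from Assumption~\ref{ass:compact}. Since $(-a-b+c+d)^+\le |a|+|b|+|c|+|d|$, and $\Gamma_{\mathrm{diss}}\ge 0$ only helps, we get
\[
(-L_f h_\gamma)^+ \le \bigl|L_f\varphi|_{\text{intra-}\mathcal B}\bigr| + |\Gamma_{\mathrm{blanket}}| + |\Gamma_{\mathrm{coupling}}| .
\]
The intra-barrier term will give the constant $M$. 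The two blanket terms each carry a factor $e_j$ with $j\in\partial\mathcal B$, so they should give the linear term $c_1\|e_{\partial\mathcal B}\|$.

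For the constant, I note that $L_f\varphi|_{\text{intra-}\mathcal B}$ is a continuous function of $x$. This uses smoothness of $\varphi$, $J$, $R$, $Q$ and $C^2$ storages (Assumption~\ref{ass:convexity}). Hence $M\coloneqq \max_{\partial\mathcal S_\gamma}|L_f\varphi|_{\text{intra-}\mathcal B}|$ is finite by compactness, and I can enlarge it slightly to make $M>0$.

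For $\Gamma_{\mathrm{blanket}}$, Cauchy--Schwarz gives
\[
|\Gamma_{\mathrm{blanket}}| \le \Bigl(\max_{\partial\mathcal S_\gamma}\sum_{i\in\mathcal B}\|\nabla_{x_i}\varphi\|\,\|A_{ij}\|\Bigr)\sqrt{|\partial\mathcal B|}\;\|e_{\partial\mathcal B}\|,
\]
with the maximum finite by continuity and compactness.

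For $\Gamma_{\mathrm{coupling}}$, Assumption~\ref{ass:port-insulated} gives $\partial\mathcal B\cap\mathcal P=\partial\mathcal B$, so the outer sum runs over all blanket nodes. Each summand is $\beta_j e_j^\top\sum_{k\ne j}A_{jk}e_k$. The factor $\sum_{k\neq j}A_{jk}(x)e_k(x_k)$ is continuous, so it is bounded on $\partial\mathcal S_\gamma$ by some constant $K$. Then $|\Gamma_{\mathrm{coupling}}|\le \gamma^{-1}\max_j\beta_j\, K\sqrt{|\partial\mathcal B|}\,\|e_{\partial\mathcal B}\|$.

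Assumption~\ref{ass:local-dep} guarantees that only finitely many blocks $A_{jk}$ with the stated locality appear. It is the hypothesis that makes the decomposition in Lemma~\ref{lem:pt-synthesis} valid, so I would cite it there rather than use it further. Setting $c_1$ to the sum of the two constants (and taking it positive) completes the argument.

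The step needing most care is the coupling term. It is quadratic in efforts, and includes $k\in\partial\mathcal B$ as well as $k\in\mathcal B$ or further nodes. Without compactness it could not be made linear in $\|e_{\partial\mathcal B}\|$. I would handle this by keeping exactly one blanket factor $e_j$ explicit and absorbing everything else into a compact-set supremum, which avoids any sign or structure argument about $A_{jk}$. The only remaining check is that every function whose supremum I take is continuous on a neighborhood of $\partial\mathcal S_\gamma$, which follows from the smoothness standing assumptions.
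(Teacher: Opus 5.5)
Your proposal is correct and follows essentially the same route as the paper. Both drop $\Gamma_{\mathrm{diss}}\ge 0$, take $M$ as the supremum of the intra-barrier term over the compact set $\partial\mathcal S_\gamma$, and factor one blanket effort $e_j$ out of $\Gamma_{\mathrm{blanket}}$ and $\Gamma_{\mathrm{coupling}}$, bounding the remaining coefficient functions by their suprema before applying Cauchy--Schwarz. The only cosmetic difference is where port-insulation is invoked: the paper uses it on $\Gamma_{\mathrm{blanket}}$, you use it on $\Gamma_{\mathrm{coupling}}$, and neither bound actually needs it, since $\|e_{\partial\mathcal B}\|$ already controls every $e_j$ with $j\in\partial\mathcal B$.
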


\begin{proof}
    See Appendix~\ref{sec:proof-adverse-drift-envelope}.
\end{proof}

With Lemma~\ref{lem:adverse-drift-envelope} in hand, we are ready to show CBF feasibility under bounded inputs.
\begin{theorem}[CBF feasibility under bounded inputs]\label{thm:pt-feasibility}
    Under Assumptions~\ref{ass:local-dep} (local dependence), \ref{ass:port-insulated} (port-insulated barrier),
    \ref{ass:blanket-coercivity} (blanket coercivity),
    \ref{ass:dom-degen} (dominated degeneracy),
    and~\ref{ass:compact} (compact design safety boundary),
    the worst-case required authority ratio $\bar{u}_\gamma$ is finite:
    \begin{equation}\label{eq:authority-bound}
    \bar{u}_\gamma
    \coloneqq
    \sup_{x \in \partial\mathcal{S}_\gamma
          \setminus \mathcal{Z}_{\partial\mathcal{B}}}
    \frac{\bigl(-L_f h_\gamma(x)\bigr)^+}
         {\|L_G h_\gamma(x)\|}
    < \infty.
    \end{equation}
    Consequently, if the admissible input set satisfies:
    \[
    \mathcal U \supseteq \{u:\|u\|\le \bar u_\gamma\},
    \]
    then the CBF condition for $h_\gamma$ is feasible on
    $\partial \mathcal S_\gamma$. Hence $\mathcal S_\gamma$ is
    forward invariant under any safety filter that enforces the CBF
    constraint.
\end{theorem}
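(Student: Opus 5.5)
The plan is to bound the ratio in \eqref{eq:authority-bound} separately near and away from the degeneracy set, and then to read off feasibility from the standard relative-degree-one argument.

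\emph{Splitting the boundary.} Let $U$ be the open neighborhood (relative to $\partial\mathcal S_\gamma$) of $\mathcal Z_{\partial\mathcal B}\cap\partial\mathcal S_\gamma$ supplied by Assumption~\ref{ass:dom-degen}. I split
\[
\partial\mathcal S_\gamma\setminus\mathcal Z_{\partial\mathcal B}=\bigl(U\setminus\mathcal Z_{\partial\mathcal B}\bigr)\cup K,
\qquad K\coloneqq\partial\mathcal S_\gamma\setminus U .
\]

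\emph{Near the degeneracy set.} On $U\setminus\mathcal Z_{\partial\mathcal B}$, combine \eqref{eq:dom-degen} with the coercivity bound \eqref{eq:blanket-coercivity}. Since $x\notin\mathcal Z_{\partial\mathcal B}$, \eqref{eq:degen-reduced} gives $\|e_{\partial\mathcal B}(x)\|>0$, so
\[
\frac{(-L_f h_\gamma)^+}{\|L_G h_\gamma\|}\le \frac{c_0\|e_{\partial\mathcal B}\|}{(\sigma_{\mathcal B}/\gamma)\|e_{\partial\mathcal B}\|}=\frac{\gamma c_0}{\sigma_{\mathcal B}} .
\]

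\emph{Away from the degeneracy set.} The set $K$ is closed in the compact set $\partial\mathcal S_\gamma$ (Assumption~\ref{ass:compact}), hence compact. It is also disjoint from $\mathcal Z_{\partial\mathcal B}$, because $U$ contains all of $\mathcal Z_{\partial\mathcal B}\cap\partial\mathcal S_\gamma$. The map $x\mapsto\|e_{\partial\mathcal B}(x)\|$ is continuous, since each $H_j\in C^2$, and it is positive on $K$. It therefore attains a minimum $\eta>0$ on $K$, and it is also bounded above on $K$ by some $E<\infty$. Lemma~\ref{lem:adverse-drift-envelope} gives $(-L_f h_\gamma)^+\le M+c_1\|e_{\partial\mathcal B}\|$, and \eqref{eq:blanket-coercivity} gives $\|L_G h_\gamma\|\ge\sigma_{\mathcal B}\eta/\gamma$. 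Hence on $K$ the ratio is at most
\[
\frac{\gamma (M+c_1E)}{\sigma_{\mathcal B}\eta}.
\]
(Using $E$ is not strictly needed: $M/\eta+c_1$ already works after dividing through by $\|e_{\partial\mathcal B}\|\ge\eta$.) Taking the maximum of the two bounds shows $\bar u_\gamma<\infty$. If $K$ is empty, the first bound alone suffices.

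\emph{Feasibility.} At $x\in\partial\mathcal S_\gamma\setminus\mathcal Z_{\partial\mathcal B}$, choose
\[
u^\ast=\frac{(-L_f h_\gamma)^+}{\|L_G h_\gamma\|^2}\,L_G h_\gamma^\top .
\]
Then $\|u^\ast\|\le\bar u_\gamma$, and $L_f h_\gamma+L_G h_\gamma u^\ast\ge0=-\alpha(h_\gamma)$, since $h_\gamma=0$ on the boundary. At $x\in\mathcal Z_{\partial\mathcal B}\cap\partial\mathcal S_\gamma$, \eqref{eq:dom-degen} with $\|e_{\partial\mathcal B}\|=0$ forces $(-L_f h_\gamma)^+=0$, so $u=0$ already satisfies the CBF inequality. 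Thus the CBF condition is feasible with $\|u\|\le\bar u_\gamma$ everywhere on $\partial\mathcal S_\gamma$. Forward invariance of $\mathcal S_\gamma$ then follows from the standard Nagumo/CBF argument: $\dot h_\gamma\ge0$ whenever $h_\gamma=0$, with regularity of $h_\gamma$ holding away from $\mathcal Z_{\partial\mathcal B}$, where $\nabla h_\gamma\ne0$.

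\emph{Main obstacle.} The delicate step is the compactness split. One must check that $U$ is genuinely open relative to $\partial\mathcal S_\gamma$, so that $K$ is compact and the positive minimum $\eta$ exists. One must also handle degenerate points, where Nagumo's theorem needs care because $\nabla h_\gamma$ could vanish there. If $\nabla h_\gamma=0$ at such points, I would appeal to the drift bound $(-L_fh_\gamma)^+=0$ there and invoke a Bony–Brezis-type subtangential condition instead.
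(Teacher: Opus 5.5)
Your proposal is correct and follows essentially the paper's own argument: the same split of $\partial\mathcal S_\gamma\setminus\mathcal Z_{\partial\mathcal B}$ into the neighborhood $U$ from Assumption~\ref{ass:dom-degen} (ratio at most $\gamma c_0/\sigma_{\mathcal B}$) and the compact remainder, where $\|e_{\partial\mathcal B}\|$ is bounded below (ratio at most $\gamma(c_1+M/\eta)/\sigma_{\mathcal B}$ via Lemma~\ref{lem:adverse-drift-envelope}), followed by the minimum-norm input and a Nagumo-type invariance argument. Your explicit check that $u=0$ works at points of $\mathcal Z_{\partial\mathcal B}\cap\partial\mathcal S_\gamma$, because \eqref{eq:dom-degen} with $\|e_{\partial\mathcal B}\|=0$ forces $(-L_f h_\gamma)^+=0$, fills in a step the paper only asserts.
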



\begin{proof}
    Define
        $\rho_\gamma(x)
        \coloneqq
        \frac{(-L_f h_\gamma(x))^+}{\|L_G h_\gamma(x)\|}$, with
        $x\in \partial\mathcal S_\gamma\setminus \mathcal Z_{\partial\mathcal B}.$
    By Assumption~\ref{ass:dom-degen} and Assumption~\ref{ass:blanket-coercivity}, there exists an open
    neighborhood $U$ of
    $\mathcal Z_{\partial\mathcal B}\cap\partial\mathcal S_\gamma$
    in $\partial\mathcal S_\gamma$ such that
    $$\rho_\gamma \leq \frac{c_0\gamma}{\sigma_{\mathcal B}}$$
    for some $c_0, \sigma_{\mathcal B} \geq 0$ on $U$.
    Next, if $\|e_{\partial\mathcal B}(x)\|=0$, then $e_j(x)=0$ for every
    $j\in\partial\mathcal B$ and in particular for every
    $j\in\partial\mathcal B\cap\mathcal P$.
    Substituting $e_j(x)=0$ into the
    equation~\eqref{eq:Lg-hgamma} kills every summand, giving
    $L_G h_\gamma(x)=0$, i.e.,
    $x\in\mathcal Z_{\partial\mathcal B}$ by~\eqref{eq:blanket-degeneracy}.
    Hence every zero-effort point $\{x\in \partial\mathcal S_\gamma\!: \|e_{\partial\mathcal B}(x)\|=0\}$ of $\partial\mathcal S_\gamma$ lies in
    $\mathcal Z_{\partial\mathcal B}\cap\partial\mathcal S_\gamma$,
    which is contained in $U$ by its definition in Assumption~\ref{ass:dom-degen}.
    Therefore, $\|e_{\partial\mathcal B}\|>0$ on the compact set
    $\partial\mathcal S_\gamma\setminus U$, and by continuity there
    exists $\delta>0$, such that,
    \[
        \|e_{\partial\mathcal B}(x)\|\ge \delta,
        \qquad
        x\in \partial\mathcal S_\gamma\setminus U.
    \]
    By Lemma~\ref{lem:adverse-drift-envelope},
    \[
    (-L_f h_\gamma)^+
    \le
    M + c_1\,\|e_{\partial\mathcal B}\|
    \qquad
    \text{on }\partial\mathcal S_\gamma.
    \]
    By Assumption~\ref{ass:blanket-coercivity},
    \[
    \|L_G h_\gamma\|
    \ge
    (\sigma_{\mathcal B}/\gamma)\,\|e_{\partial\mathcal B}\|,
    \]
    so for every
    $x \in \partial\mathcal S_\gamma \setminus U$,
    \begin{equation}\label{eq:thm:pt-feasibility:auth-ratio-bound}
        \rho_\gamma(x) = \frac{(-L_f h_\gamma)^+}
             {\|L_G h_\gamma\|}
        \le
        \frac{\gamma\,c_1}{\sigma_{\mathcal B}}
        +
        \frac{\gamma\,M}{\sigma_{\mathcal B}\,\delta}.
    \end{equation}
    Combining \eqref{eq:thm:pt-feasibility:auth-ratio-bound} with $\rho_\gamma \leq \frac{c_0 \sigma_{\mathcal B}}{\gamma}$ on
    $U\setminus \mathcal Z_{\partial\mathcal B}$ shows that
    $\rho_\gamma$ is uniformly bounded below by $\max\{\frac{c_0\gamma}{\sigma_{\mathcal B}}, \frac{\gamma c_1}{\sigma_{\mathcal B}}
        +
        \frac{\gamma M}{\sigma_{\mathcal B}\delta}\}$ on
    $\partial\mathcal S_\gamma\setminus \mathcal Z_{\partial\mathcal B}$,
    and therefore $\bar u_\gamma<\infty$.

    It remains to connect $\bar{u}_\gamma < \infty$ to CBF feasibility.
    On $\partial\mathcal{S}_\gamma$, $h_\gamma(x) = 0$
    and $\alpha(0) = 0$ for class-$\mathcal K$ function $\alpha$, so the CBF condition reduces to finding $u \in \mathcal{U} \supseteq \{u:\|u\|\le u_{\max}\}$ with
    $L_G h_\gamma(x)\, u \geq -L_f h_\gamma(x)$.
    If $L_f h_\gamma(x) \geq 0$ the drift is already
    nonnegative, and $u = 0$ is feasible.
    If $L_f h_\gamma(x) < 0$, the input must compensate
    for the adverse drift $(-L_f h_\gamma(x))^+ = |L_f h_\gamma(x)|$.
    The candidate minimum-norm input that could achieve $L_G h_\gamma(x)\, u \geq (-L_f h_\gamma(x))^+$ is
    $u^\star
    = \|u^\star\|
      \frac{L_G h_\gamma^\top}{\|L_G h_\gamma\|}$, with equality attained at $\|u^\star\| = \rho_\gamma$.
    So, feasibility in this case is equivalent to
    $\|u^\star\| = \rho_\gamma(x) \leq u_{\max}$.
    Since $\bar u_\gamma < \infty$, any $u_{\max} \geq \bar{u}_\gamma$ suffices
    on $\partial\mathcal{S}_\gamma \setminus
    \mathcal{Z}_{\partial\mathcal{B}}$.
    
    Any input bound
    $\mathcal{U} \supseteq \{u : \|u\| \leq \bar{u}_\gamma\}$
    makes the CBF condition feasible on all of
    $\partial\mathcal{S}_\gamma$.
    Forward invariance of $\mathcal{S}_\gamma$ then follows
    from the Nagumo-type result
    of~\cite[Thm.~2]{ames2019control}.
\end{proof}

Violation of dominated degeneracy in an underactuated mechanical system renders the corresponding energy-aware CBF infeasible under bounded inputs.
The serial spring--mass system in Fig.~\ref{fig:influence_graphs}~(Right) provides one such example.
This observation motivates us to look for structural conditions under which the assumptions hold automatically.

\subsubsection{Strong Structural Condition}
\label{sec:strong-struct}
In the following, we discuss a structural condition under which \eqref{eq:true-dynamics} automatically satisfies Assumption~\ref{ass:dom-degen}.

\begin{definition}[Strong structural condition]%
\label{def:strong-structural}
{A barrier set $\mathcal{B}$ satisfies the
\emph{strong structural condition} if the subgraph 
of~$\mathcal{G}(A)$ induced on~$\mathcal{B}$ has no 
edges (including self-edges).}
\end{definition}
Definition~\ref{def:strong-structural} states that no energy is exchanged or dissipated within the barrier subsystem set and that every term in the drift that acts on barrier states is mediated by a blanket effort.
The condition holds for mechanical configuration constraints, where $\mathcal{B} = \{q\}$ with no intra-$\mathcal{B}$ coupling and no configuration-level dissipation.
The condition also holds for constraints on conservative scalar subsystems, such as the capacitor constraint in Example~\ref{ex:lc-ladder}, where $\mathcal{B} = \{C_1\}$ is a single lossless storage element.
The strong structural condition eliminates the need for the neighborhood argument in Theorem~\ref{thm:pt-feasibility} and yields a simplified authority bound.

\begin{corollary}[Feasibility under strong structural condition]%
\label{cor:strong-feasibility}
Suppose
Assumptions
\ref{ass:local-dep},
\ref{ass:port-insulated},
\ref{ass:blanket-coercivity},
and~\ref{ass:compact} hold, and the barrier
set~$\mathcal{B}$ satisfies the strong structural condition
(Definition~\ref{def:strong-structural}).
Then the authority bound~\eqref{eq:authority-bound} simplifies
to:
\begin{equation}\label{eq:authority-strong}
  \bar{u}_\gamma
  \;\leq\;
  \frac{\gamma\, c_1}{\sigma_{\mathcal{B}}}.
\end{equation}
\end{corollary}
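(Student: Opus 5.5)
The plan is to show that under the strong structural condition the intra-barrier residual vanishes identically, so the constant $M$ in Lemma~\ref{lem:adverse-drift-envelope} can be taken to be zero. Then the ratio bound \eqref{eq:thm:pt-feasibility:auth-ratio-bound} holds on all of $\partial\mathcal S_\gamma\setminus\mathcal Z_{\partial\mathcal B}$, with no neighborhood argument and no $\delta$.

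\textbf{Step 1: the intra-barrier term vanishes.} Under Definition~\ref{def:strong-structural}, the subgraph of $\mathcal G(A)$ induced on $\mathcal B$ has no edges. Hence $A_{ij}\equiv 0$ for all $i,j\in\mathcal B$, including $i=j$. This gives $Q_{ij}\equiv0$ for $i\neq j$ in $\mathcal B$ and $A_{ii}\equiv 0$ for $i\in\mathcal B$. Therefore $L_f\varphi|_{\text{intra-}\mathcal B}\equiv 0$ in \eqref{eq:drift-terms}, and \eqref{eq:Lf-hgamma-decomp} reduces to
\[
L_f h_\gamma=\Gamma_{\mathrm{blanket}}+\Gamma_{\mathrm{diss}}-\Gamma_{\mathrm{coupling}}.
\]
Each remaining term is bilinear in the efforts and contains a factor $e_j$ with $j\in\partial\mathcal B$; by Assumption~\ref{ass:port-insulated}, $\partial\mathcal B\cap\mathcal P=\partial\mathcal B$. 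In particular, \eqref{eq:residual-drift} is zero on $\mathcal Z_{\partial\mathcal B}$.

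\textbf{Step 2: a linear drift envelope with $M=0$.} I will revisit the proof of Lemma~\ref{lem:adverse-drift-envelope} (Appendix) and bound each term on the compact set $\partial\mathcal S_\gamma$ (Assumption~\ref{ass:compact}). By Cauchy--Schwarz,
\[
|\Gamma_{\mathrm{blanket}}|\le \Big(\sup_{\partial\mathcal S_\gamma}\sum_{i,j}\|\nabla_{x_i}\varphi\|\,\|A_{ij}\|\Big)\,\|e_{\partial\mathcal B}\|.
\]
Similarly, $|\Gamma_{\mathrm{coupling}}|\le \frac1\gamma\big(\max_j\beta_j\big)\sup\|A\|\sup\|e\|\,\|e_{\partial\mathcal B}\|$. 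All suprema are finite by continuity of $\varphi,A,\nabla H$ on a compact set. The term $\Gamma_{\mathrm{diss}}\ge 0$ only helps and can be dropped from $(-L_fh_\gamma)^+$. So $(-L_fh_\gamma)^+\le c_1\|e_{\partial\mathcal B}\|$ on $\partial\mathcal S_\gamma$, where $c_1$ can be taken to be the constant of Lemma~\ref{lem:adverse-drift-envelope}.

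The point needing care is this: in the lemma, $M$ absorbs exactly the intra-barrier residual, which is not proportional to $e_{\partial\mathcal B}$, while $c_1$ absorbs the effort-linear terms. I must check that the appendix's $c_1$ indeed dominates the blanket and coupling terms without using $M$. If it does not, I will redefine $c_1$ as the explicit constant above. In either case the form of \eqref{eq:authority-strong} is unchanged.

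\textbf{Step 3: conclude.} For $x\in\partial\mathcal S_\gamma\setminus\mathcal Z_{\partial\mathcal B}$, Assumption~\ref{ass:blanket-coercivity} gives $\|L_Gh_\gamma\|\ge(\sigma_{\mathcal B}/\gamma)\|e_{\partial\mathcal B}\|$, and $\|e_{\partial\mathcal B}\|>0$ there by \eqref{eq:degen-reduced}. Hence
\[
\rho_\gamma(x)\le \frac{c_1\|e_{\partial\mathcal B}\|}{(\sigma_{\mathcal B}/\gamma)\|e_{\partial\mathcal B}\|}=\frac{\gamma c_1}{\sigma_{\mathcal B}}.
\]
Taking the supremum gives \eqref{eq:authority-strong}. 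Note that Assumption~\ref{ass:dom-degen} is then automatically satisfied with $U=\partial\mathcal S_\gamma$ and $c_0=c_1$. The feasibility and forward-invariance conclusions of Theorem~\ref{thm:pt-feasibility} therefore follow without assuming Assumption~\ref{ass:dom-degen} separately.
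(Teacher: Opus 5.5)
Your proposal is correct and follows essentially the same route as the paper: the strong structural condition forces $A_{ij}\equiv 0$ for all $i,j\in\mathcal B$, so the intra-barrier term vanishes and one may take $M=0$ in Lemma~\ref{lem:adverse-drift-envelope}, after which blanket coercivity yields $\rho_\gamma\le\gamma c_1/\sigma_{\mathcal B}$ pointwise on $\partial\mathcal S_\gamma\setminus\mathcal Z_{\partial\mathcal B}$. The lemma's $c_1$ already bounds $|\Gamma_{\mathrm{blanket}}|+|\Gamma_{\mathrm{coupling}}|$ without using $M$, so your Step 2 check goes through, and your remark that Assumption~\ref{ass:dom-degen} holds with $U=\partial\mathcal S_\gamma$ and $c_0=c_1$ is also how the paper argues.
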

\begin{proof}
See Appendix~\ref{sec:proof-strong-feasibility}.
\end{proof}

\section{Numerical Simulations}\label{sc:experiment}
\vspace{-.25em}
We validate the structural predictions of the port-transversal barrier framework
on the LC ladder network of Example~\ref{ex:lc-ladder}.
The system parameters are
$C_1 = 1.0$, $C_2 = 2.0$, $L_1 = 0.5$, $L_2 = 1.0$,
$R_1 = 0.1$, $R_2 = 0.05$,
with charge constraint $q_1 \leq q_{\max} = 0.8$ on capacitor~$C_1$.
The barrier node set is $\mathcal{B} = \{C_1\}$ and the
barrier-insulating blanket is
$\partial\mathcal{B} = \{L_1, L_2\}$.
We compare two input configurations sharing the same physics
($J$, $R$, $Q$, $H$) but differing only in the input map~$G$:
\begin{center}
\scalebox{0.92}{%
\begin{tabular}{@{}lcc@{}}
  & Single input ($m\!=\!1$) & Dual input ($m\!=\!2$) \\
  \hline
  $\mathcal{P}$
    & $\{L_1\}$ & $\{L_1,\, L_2\}$ \\
  $\partial\mathcal{B} \cap \mathcal{P}$
    & $\{L_1\}$ & $\{L_1,\, L_2\}$ \\
  Port-insulated
    & No\; ($L_2 \notin \mathcal{P}$)
    & Yes \\
  Thm.~\ref{thm:pt-feasibility}
    & Does not apply
    & $\bar{u}_\gamma < \infty$
\end{tabular}}
\end{center}
 
For the dual-input systems, the port-transversal barrier is
$h_\gamma(x) = (q_{\max} - q_1)
  - \frac{1}{\gamma}\bigl(\bar{H}_{L_1}(\phi_1) + \bar{H}_{L_2}(\phi_2)\bigr)$
with $\bar{H}_{L_j} = \phi_j^2/(2L_j)$, which includes both blanket
storages.
For the single-input system, only $L_1$ carries a port, so the barrier only subtracts
$\bar{H}_{L_1}$:
$h_\gamma(x) = (q_{\max} - q_1)
  - \frac{1}{\gamma}\bar{H}_{L_1}(\phi_1)$.
We note that using the full-blanket barrier $\bar{H}_{L_1} + \bar{H}_{L_2}$ for the single-input system does not resolve the infeasibility in the following simulations, since $G_{L_2} = 0$ leaves $L_G h_\gamma$ unchanged while adding uncontrollable drift terms.

We set $\gamma = 4.0$ and $\alpha(s) = 1.5\, s$.
Lastly, both systems are driven by a nominal sinusoidal input and filtered by
the standard CBF-QP safety filter with input bound
$\|u\| \leq u_{\max} = 10$, using forward Euler integration
($\Delta t = 10^{-3}$, $T = 25\,\mathrm{s}$) from a shared initial
condition $x_0 = (0.6,\, 0.2,\, 0.5,\, -0.3)^\top$.

\subsection{Constraint Enforcement}
 
 
Fig.~\ref{fig:lc_authority_phase} reveals the mechanism underlying this failure.
On the left panel, the control authority $\|L_G h_\gamma(x)\|$ over time shows that 
the dual-input barrier maintains
positive authority throughout,
whereas the single-input authority occasionally vanishes.
Notice that the single-input trajectory crosses the
constraint line at $\phi_1 = 0$, which is within the open neighborhood of the degeneracy set, as predicted by Theorem~\ref{thm:pt-feasibility}.

%
\begin{figure}[t]
    \centering
    \includegraphics[width=\linewidth]{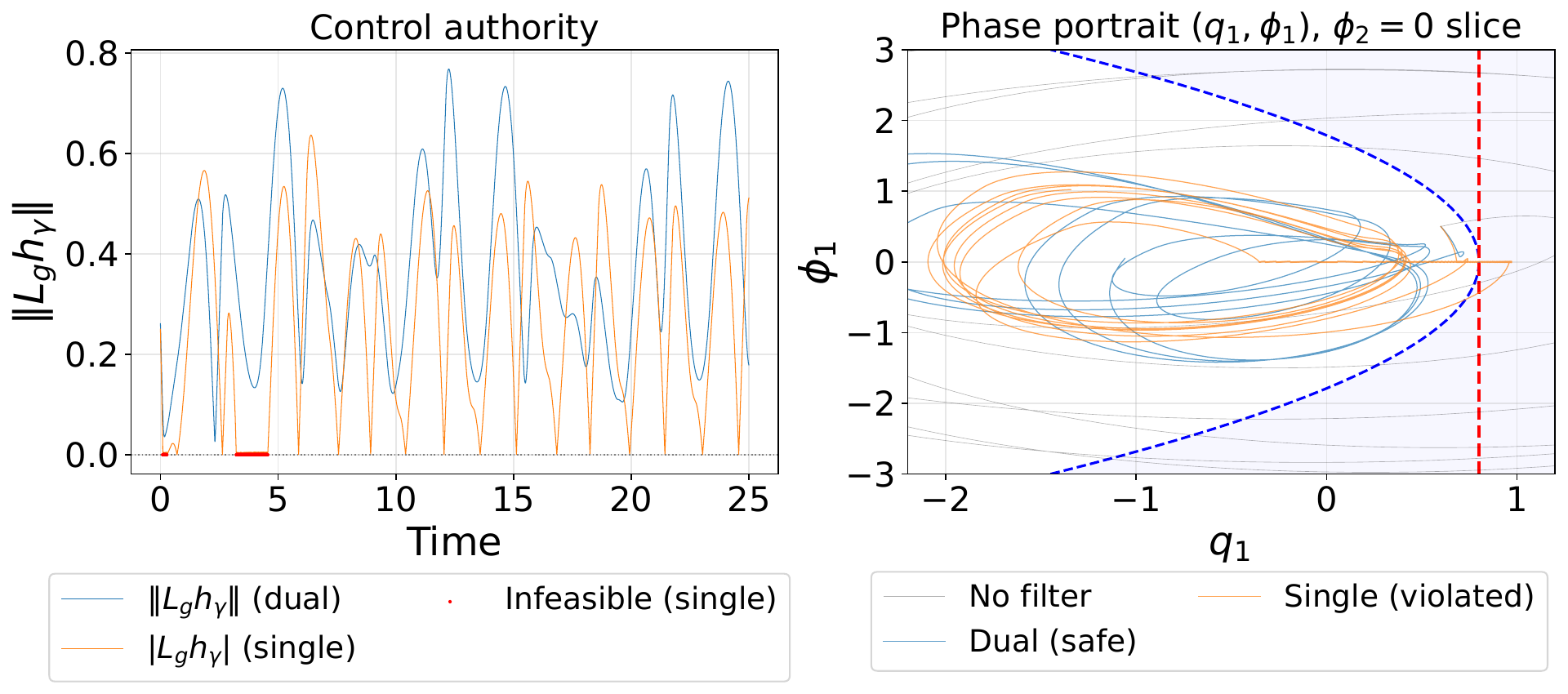}
    \caption{Control authority and phase portrait.
      (Left)~The control authority $\|L_G h_\gamma\|$: the
      dual-input system never loses input sensitivity, while
      the single-input authority drops to zero
      (red dots mark infeasible steps).
      (Right)~Phase portrait $(q_1,\phi_1)$ at the $\phi_2 = 0$
      slice: the dual-input trajectory (blue) stays within
      the safe set; the single-input trajectory (orange)
      crosses the constraint boundary.}
    \label{fig:lc_authority_phase}
    \vspace{-1em}
\end{figure}

\section{Conclusion}
We developed a graph-theoretic framework for safety-critical control of composite port-Hamiltonian systems.
The influence graph, constructed from the interconnection graph structure of the underlying network of subsystems, 
yields a lower bound on the relative degree of any safety constraint via shortest-path distance and identifies structural obstructions when no path to the input ports exists (Theorem~\ref{thm:relative-degree}).
When the relative degree exceeds one, the barrier-insulating blanket prescribes which subsystems states must enter a reshaped barrier.
The resulting port-transversal barrier (Definition~\ref{def:pt-barrier-blanket}) restores relative degree one, and its CBF feasibility under bounded inputs follows from three verifiable conditions together with a compactness assumption: (i) port-insulation, (ii) blanket coercivity, and (iii) benign degeneracy (Theorem~\ref{thm:pt-feasibility}).
Simulations on the LC ladder network confirm that the graph-theoretic condition $\partial\mathcal{B} \subseteq \mathcal{P}$ differentiates enforceable from unenforceable safety filtering.

\bibliographystyle{IEEEtran}
\bibliography{refs}

\section{Appendix}\label{sec:appendix}
Additional conceptual constructs are organized here.

\subsection{Proof of Lemma~\ref{lem:pt-synthesis} (Port-transversal barrier synthesis)}
\label{sec:proof-pt-synthesis}

\begin{proof}
(\ref{item:pt-restored}):\;
Since
$\operatorname{csupp}(\varphi) = \mathcal{B}$ and
$\partial\mathcal{B} \cap \mathcal{B} = \emptyset$,
we have $\nabla_{x_j} \varphi \equiv 0$ for every
$j \in \partial\mathcal{B}$.
The only terms in~\eqref{eq:pt-barrier-blanket} that depend
on~$x_j$ for $j \in \partial\mathcal{B} \cap \mathcal{P}$
are the shifted storages~$\bar{H}_j$, so
$\nabla_{x_j} h_\gamma
= -(\beta_j/\gamma)\,\nabla H_j(x_j)$.
For $\ell \in \mathcal{P} \setminus
(\partial\mathcal{B} \cap \mathcal{P})$, either
$\ell \notin \partial\mathcal{B}$ (so no shifted storage
term involves~$x_\ell$) or $\ell \in \mathcal{B}$
(excluded by $\partial\mathcal{B} \cap \mathcal{B}
= \emptyset$), giving
$\nabla_{x_\ell} h_\gamma^\top G_\ell = 0$ or
a term involving $\nabla_{x_\ell} \varphi$ with
$G_\ell \equiv 0$ (for $\ell \notin \mathcal{P}$).
Summing over all port nodes:
\begin{align*}
  L_G h_\gamma
  &= \sum_{\ell \in \mathcal{P}}
     \nabla_{x_\ell} h_\gamma^\top G_\ell
  =
  \sum_{j \in \partial\mathcal{B} \cap \mathcal{P}}
  \nabla_{x_j} h_\gamma^\top G_j
  \\
  &=
  -\frac{1}{\gamma}
  \sum_{j \in \partial\mathcal{B} \cap \mathcal{P}}
  \beta_j
  \nabla H_j(x_j)^\top G_j(x),
\end{align*}
yielding~\eqref{eq:Lg-hgamma}.
By~\eqref{eq:blanket-degeneracy},
$L_G h_\gamma(x) = 0$ if and only if
$x \in \mathcal{Z}_{\partial\mathcal{B}}$.

(\ref{item:safe-set}):\;
Strict convexity
in Assumption~\ref{ass:convexity} gives
$\bar{H}_j \geq 0$ for each
$j \in \partial\mathcal{B} \cap \mathcal{P}$,
so $h_\gamma \leq \varphi$ pointwise, hence
$\mathcal{S}_\gamma \subseteq \mathcal{A}$.
Since the penalty
$\frac{1}{\gamma}\sum_{j \in \partial\mathcal{B}\cap\mathcal{P}}
\beta_j \bar{H}_j$
is nonincreasing in~$\gamma$, the monotonicity
$\mathcal{S}_{\gamma_1} \subseteq \mathcal{S}_{\gamma_2}$
for $\gamma_1 \leq \gamma_2$ follows.
Now let $x \in \operatorname{int}(\mathcal{A})$, so $\varphi(x) > 0$.
Define
\[
P(x)
\;\coloneqq\;
\sum_{j \in \partial\mathcal{B}\cap\mathcal{P}}
\beta_j \bar{H}_j(x_j) \;\geq\; 0.
\]
If $P(x)=0$, then $h_\gamma(x)=\varphi(x)>0$ for every $\gamma>0$.
If $P(x)>0$, choose any
$\gamma > P(x)/\varphi(x)$.
Then
\(
h_\gamma(x)
=
\varphi(x) - \frac{1}{\gamma}P(x)
> 0,
\)
so $x \in \mathcal{S}_\gamma$.
Hence
$\operatorname{int}(\mathcal{A})
\subseteq \bigcup_{\gamma>0}\mathcal{S}_\gamma$.
The reverse inclusion follows from
$\mathcal{S}_\gamma \subseteq \mathcal{A}$ for every $\gamma>0$.

(\ref{item:drift-decomp}):\;
Expanding
$L_f h_\gamma = L_f \varphi
- \frac{1}{\gamma}
\sum_{j \in \partial\mathcal{B} \cap \mathcal{P}}
\beta_j\, L_f \bar{H}_j$
and using~\eqref{eq:barrier-flow-decomp} to split~$L_f \varphi$
gives the intra-$\mathcal{B}$ and
$\Gamma_{\mathrm{blanket}}$ terms.
For each shifted storage,
$L_f \bar{H}_j
= e_j^\top \sum_{k} A_{jk}\, e_k
= -e_j^\top R_{jj}\, e_j
  + e_j^\top \sum_{k \neq j} A_{jk}\, e_k$,
yielding $\Gamma_{\mathrm{diss}}$ and
$\Gamma_{\mathrm{coupling}}$ after sign reversal.
Since $R_{jj} \succeq 0$, we have
$\Gamma_{\mathrm{diss}} \geq 0$.
\end{proof}

\subsection{Proof of Lemma~\ref{lem:adverse-drift-envelope} (Drift envelope on the design boundary)}
\label{sec:proof-adverse-drift-envelope}

\begin{proof}
    From the drift decomposition~\eqref{eq:Lf-hgamma-decomp},
    \[
    -L_f h_\gamma
    =
    -L_f \varphi|_{\text{intra-}\mathcal B}
    -\Gamma_{\mathrm{blanket}}
    -\Gamma_{\mathrm{diss}}
    +\Gamma_{\mathrm{coupling}}.
    \]
    Since $\Gamma_{\mathrm{diss}}\ge 0$ by definition of $R$, we obtain
    \begin{align*}
        -L_f h_\gamma
        &\le
        -L_f \varphi|_{\text{intra-}\mathcal B}
        -\Gamma_{\mathrm{blanket}}
        +\Gamma_{\mathrm{coupling}},\\
        -L_f h_\gamma
        &\le
        \bigl|L_f \varphi|_{\text{intra-}\mathcal B}\bigr|
        +
        |\Gamma_{\mathrm{blanket}}|
        +
        |\Gamma_{\mathrm{coupling}}|,
    \end{align*}
    because the R.H.S. is non-negative, we have
    \[
    (-L_f h_\gamma)^+
    \le
    \bigl|L_f \varphi|_{\text{intra-}\mathcal B}\bigr|
    +
    |\Gamma_{\mathrm{blanket}}|
    +
    |\Gamma_{\mathrm{coupling}}|.
    \]

    Moving forward, we want to show that both $\Gamma_{\text{blanket}}$ and $\Gamma_{\text{coupling}}$ can be written as sums of the form $\sum_{j \in \partial\mathcal{B} \cap \mathcal{P}} e_j^\top c$.
    For $\Gamma_{\mathrm{coupling}}$, the outer factor $e_j$ with
    $j \in \partial\mathcal{B} \cap \mathcal{P}$ is already
    present in~\eqref{eq:drift-terms}.
    Absorbing the remaining factors into a coefficient function
    $\phi_j(x)
    \coloneqq
    (\beta_j/\gamma)\sum_{k \neq j} A_{jk}(x)\,e_k(x)
    \in \mathbb{R}^{n_j}$
    gives
    $\Gamma_{\mathrm{coupling}}
    =
    \sum_{j \in \partial\mathcal{B} \cap \mathcal{P}}
    e_j^\top \phi_j$.
    For $\Gamma_{\mathrm{blanket}}$, transposing each scalar summand and swapping the summation order yields
    $\Gamma_{\mathrm{blanket}}
    =
    \sum_{j \in \partial\mathcal{B}}
    e_j^\top \psi_j$
    with
    $\psi_j(x)
    \coloneqq
    \sum_{i \in \mathcal{B}} A_{ij}(x)^\top \nabla_{x_i}\varphi(x) \in \mathbb{R}^{n_j}$.
    Here the sum ranges over all of~$\partial\mathcal{B}$.
    Assumption~\ref{ass:port-insulated} gives $\partial\mathcal{B} \subseteq \mathcal{P}$, reducing this to
    $\sum_{j \in \partial\mathcal{B} \cap \mathcal{P}} e_j^\top \psi_j$ with $\psi_j \coloneqq \psi_j$.       
    
    Since $\partial\mathcal{S}_\gamma$ is compact by
    Assumption~\ref{ass:compact}, each continuous coefficient function is
    bounded there.
    Define
    \[
      \bar\psi_j
      \;\coloneqq\;
      \sup_{x \in \partial\mathcal{S}_\gamma}
      \|\psi_j(x)\|,
      \qquad
      \bar\phi_j
      \;\coloneqq\;
      \sup_{x \in \partial\mathcal{S}_\gamma}
      \|\phi_j(x)\|,
    \]
    and let
    $c_1
    \coloneqq
    \bigl(
    \sum_{j \in \partial\mathcal{B} \cap \mathcal{P}}
    (\bar\psi_j + \bar\phi_j)^2
    \bigr)^{1/2}$.
    Then, by Cauchy--Schwarz on each inner product and
    subsequently on the discrete sum,
    \begin{equation*}
      \begin{aligned}
        |\Gamma_{\mathrm{blanket}}(x)|
        +
        |\Gamma_{\mathrm{coupling}}(x)|
        &\le
        \sum_{j \in \partial\mathcal{B} \cap \mathcal{P}}
        (\bar\psi_j + \bar\phi_j)\,\|e_j(x)\|
        \\
        &\le
        c_1\,\|e_{\partial\mathcal{B}}(x)\|.
      \end{aligned}
    \end{equation*}
    Likewise,
    $L_f \varphi|_{\text{intra-}\mathcal B}$ is continuous on the compact set
    $\partial\mathcal S_\gamma$, so:
    \begin{equation}\label{eq:intra-barrier-sup}
        M
    \coloneqq
    \sup_{x\in\partial\mathcal S_\gamma}
    \bigl|L_f \varphi|_{\text{intra-}\mathcal B}(x)\bigr|
    <\infty.
    \end{equation}
    Substituting these two bounds into the previous estimate yields
    \[
    (-L_f h_\gamma(x))^+
    \le
    M + c_1\,\|e_{\partial\mathcal B}(x)\|,
    \qquad
    x\in \partial\mathcal S_\gamma,
    \]
    which is \eqref{eq:adverse-drift-envelope} as desired.
\end{proof}

\subsection{Proof of Corollary~\ref{cor:strong-feasibility} (Feasibility under strong structural condition)}%
\label{sec:proof-strong-feasibility}

\begin{proof}
    Since Definition~\ref{def:strong-structural} holds, the intra-barrier
    residual in~\eqref{eq:residual-drift} vanishes identically on all of~$\mathcal{D}$, not just
    on~$\mathcal{Z}_{\partial\mathcal{B}}$.
    The constant $M = 0$ by definition in \eqref{eq:intra-barrier-sup}.
    Therefore, 
    Lemma~\ref{lem:adverse-drift-envelope} gives 
    $$(-L_f h_\gamma(x))^+ \leq c_1\|e_{\partial \mathcal B}(x)\|$$
    under Assumption~\ref{ass:local-dep}, \ref{ass:port-insulated}, and \ref{ass:compact}.
    By setting $c_0 \coloneqq c_1$, Assumption~\ref{ass:dom-degen} is satisfied, and the hypotheses of Theorem~\ref{thm:pt-feasibility} hold with $U = \partial \mathcal S_\gamma$.
    By Assumption~\ref{ass:blanket-coercivity}:
    \begin{equation*}
      \frac{(-L_f h_\gamma)^+}{\|L_G h_\gamma\|}
      \;\leq\;
      \frac{c_1\,\|e_{\partial\mathcal{B}}\|}
           {(\sigma_{\mathcal{B}}/\gamma)\,
            \|e_{\partial\mathcal{B}}\|}
      =
      \frac{\gamma\, c_1}{\sigma_{\mathcal{B}}}
    \end{equation*}
    uniformly on
    $\partial\mathcal{S}_\gamma$.
\end{proof}

\subsection{Verifying Blanket Coercivity}
\label{sec:rem-verify-coercivity}
\begin{remark}[Verifying blanket coercivity]%
\label{rem:verify-coercivity}
When $|\partial\mathcal{B} \cap \mathcal{P}| = 1$, say $\{p\}$, condition~\eqref{eq:blanket-coercivity} holds with
\[
\sigma_{\mathcal B}=\beta_p\,\underline{\sigma}_p,
\qquad
\underline{\sigma}_p \coloneqq \inf_{x\in\mathcal D}\sigma_{\min}(G_p(x)),
\]
provided $\underline{\sigma}_p>0$, i.e., $G_p(x)$ has full row rank uniformly on~$\mathcal D$.
For multiple blanket ports, one additionally needs the weighted terms $\beta_j\, e_j^\top G_j$ not to cancel across ports.
A sufficient condition is:
\[
G_j(x)G_j(x)^\top \succeq \underline{\sigma}_j^2 I_{n_j}
\quad
\text{for all }x\in\mathcal D,\; j\in \partial\mathcal{B}\cap\mathcal P,
\]
together with pairwise orthogonality
$\operatorname{im}(G_j(x)) \perp \operatorname{im}(G_k(x))$
with $\forall\,j\neq k$,
which guarantees that the summands occupy independent directions in~$\mathbb R^m$ and cannot cancel.
\end{remark}

To see why blockwise rank alone does not suffice, consider the
LC ladder of Example~\ref{ex:lc-ladder} with a single shared
input channel $m = 1$ and $G = (0,\, 0,\, 1,\, 1)^\top$.
Each block satisfies $G_j G_j^\top = 1 \geq \sigma_j^2$, yet
$L_G h_\gamma
= -\gamma^{-1}(\beta_1\, i_{L_1} + \beta_2\, i_{L_2})$
vanishes on the surface
$\{\beta_1\, i_{L_1} = -\beta_2\, i_{L_2}\}$.
The set where $L_G h_\gamma$ vanishes is strictly
larger than the set
$\{i_{L_1} = i_{L_2} = 0\}$ predicted by the graph topology.
In contrast, the dual-input map
$G = \bigl(\begin{smallmatrix}
0 & 0 & 1& 0 \\ 0 & 0 & 0 & 1
\end{smallmatrix}\bigr)^\top$
assigns each port its own channel, giving
$\operatorname{im}(G_{L_1}) \perp \operatorname{im}(G_{L_2})$
and restoring blanket coercivity with
$\sigma_{\mathcal{B}}
= \min(\beta_1\, \sigma_1,\; \beta_2\, \sigma_2)$ with $\sigma_j \coloneqq \sigma_{\min}(G_j(x))$.

\end{document}